\newtheorem{theo}{Theorem}
\newtheorem{lemma}{Lemma}
\newtheorem{remark}{Remark}
\begin{document}
\title{On the Existence of an MVU Estimator for Target Localization with Censored, Noise Free Binary Detectors}
\author{Arian~Shoari,~\IEEEmembership{Student Member,~IEEE,}
        and Alireza~Seyedi,~\IEEEmembership{Senior Member,~IEEE,}
        \thanks{Our appreciation goes out to Mona Komeijani for preparing the illustrative graphs of this paper. Also we would like to thank Professor Mark Bocko and Professor Azadeh Vosoughi for their valuable feedback in finalizing this manuscript. A. Shoari is with the Department
of Electrical and Computer Engineering, University of Rochester, Rochester,
NY, E-mail: shoari@ece.rochester.edu. A. Seyedi sadly passed away in Oct 2014 before the final revision of this paper. He was with the Department of Electrical Engineering and Computer Science, University of Central Florida, Orlando, FL. E-mail: alireza.seyedi@ieee.org. }}



\maketitle

\begin{abstract}
The problem of target localization with censored noise free binary detectors is considered. In this setting only the detecting sensors report their locations to the fusion center. It is proven that if the radius of detection is not known to the fusion center, a minimum variance unbiased (MVU) estimator does not exist. Also it is shown that when the radius is known the center of mass of the possible target region is the MVU estimator. In addition, a sub-optimum estimator is introduced whose performance is close to the MVU estimator but is preferred computationally. Furthermore, minimal sufficient statistics have been provided, both when the detection radius is known and when it is not. Simulations confirmed that the derived MVU estimator outperforms several heuristic location estimators.
\end{abstract}


\newpage
\begin{tabular}{l l}
$\mathbb{G}$ &        Region of sensor deployment  \\
$\mathcal{A}(\mathbb{G})$ &     Hyper volume (area) of the region $\mathbb{G}$\\
$\mathbf{z}_{\text{T}}$ & Target location in the space\\
$\mathbf{O}$ & Origin \\
$N$ &  Total number of sensors deployed in the region\\
$n$ & Number of detecting sensors \\
$\mathcal{S}$ & Index of detecting sensors\\
$\mathcal{Z}$ & Set of locations of detecting sensors\\
$\mathbf{Z}$ &  Vector of locations of detecting sensors\\
$\mathbf{Z}_k$& Vector of locations of the first $k$ detecting sensors \\
$\mathcal{B}_{R}(\mathbf{z}_i)$ & A ball with radius $R$ around the $i$'th sensor\\
$\mathcal{T}(\mathbf{Z})$ & The \textit{possible target region} based on $\mathbf{Z}$ observation \\
$\mathcal{T}_j(\mathbf{Z})$ & The \textit{possible target region} with  sensor $j$th excluded from evaluation\\
$\mathcal{S}_{\text{MSS}}$ & Index of sensors forming the minimal sufficient statistic \\
$\mathcal{Z}_{\text{MSS}}$ & Set of locations of the sensors forming the minimal sufficient statistic \\
$\mathbf{Z}_{\text{MSS}}$ & Vector of the locations of sensors forming the minimal sufficient statistic\\
$\mathcal{MSS}(\mathbf{Z})$ & Function outputting $\mathcal{Z}_{\text{MSS}}$ from input $\mathbf{Z}$  \\
$MSS(\mathbf{Z})$  & Function outputting $\mathbf{Z}_{\text{MSS}}$ from input $\mathbf{Z}$  \\
$\mathbf{1}_X$ & The indicator function of $X$  \\
$\sigma(\mathcal{T})$ & Boundary surface of $\mathcal{T}$ \\
$f(\mathbf{Z};\mathbf{z}_{\text{T}})$ & Probability density function of occurrence of $\mathbf{Z}$ if the target located at $\boldsymbol{\mathbf{z}_{\text{T}}}$\\
$\mathcal{R}_{\Theta}(\mathbf{X})$ & Range of random variable $\mathbf{X}$ over parameter space $\Theta$\\
$\mathcal{C}(\mathcal{Z}_{\text{MSS}})$ & Convex hull of the set $\mathcal{Z}_{\text{MSS}}$ \\
$\mathcal{N}(\mathcal{T})$ & The set of points whose maximum distance from $\mathcal{T}$ are not more than  $R$ \\
$\mathbf{T_{k}}$ & A vector storing the distance of $(\mathbf{z}_1,..,\mathbf{z}_k)$ elements from $\mathbf{z}_1$ \\
$g_1\left(\mathbf{Z}\right)$ & An estimator based on observation $\mathbf{Z}$\\
$CM\left(\mathcal{T}(\mathbf{Z}, R_1)\right)$ & Center of mass of $\mathcal{T}(\mathbf{Z})$ with radius $R_1$ \\
\end{tabular}
\newpage

\section{Introduction}
Localization of an unknown transmitter with observations from a network of sensors is a well known problem in the literature \cite{Amundson09,Wang2012}. The observations can be carried out through measurement of Angle of Arrival (AoA) \cite{Oshman1999DOA,Kaplan2001DOA,Peng06AoA}, Time Difference of Arrival (TDoA) \cite{Kung1998TDOA,Yang2005TDOA,Yang2006TDOA2}, or Received Signal Strength (RSS) \cite{sheng2003collaborative,Sheng05,Blatt06,kieffer2006centralized,Vaghefi11_RSSbased,Rabbat2005,Ampeliotis08,Bulusu02,VarshneyFading07,VarshneyBinCoh,Arian2014SSP,ArianSpawc2010,Liu2007Target,Murthy2011multiple,Murthy2012multiple}. When the sensors are mobile, Frequency Difference of Arrival (FDoA) can also be used as an additional source of information \cite{Lee2007,Yu2012,Fuyong2014comments}. AoA, TDoA and FDoA approaches require sophisticated sensors, and, therefore do not fit well within the limitations of wireless sensor networks, specifically for energy and complexity constraints of the nodes. In practice, however, an exact(un-quantized) measurement is unrealistic because it requires unlimited bandwidth to communicate the data to the  fusion center. A binary RSS measurement is preferred because it is simpler and requires less resources. A number of papers describe methods that use binary RSS localization \cite{VarshneyBinCoh,Arian2014SSP,ArianSpawc2010,Liu2007Target,Murthy2011multiple,Murthy2012multiple}. In some papers it is assumed that the propagation model is isotropic and detection is noise free \cite{ArianSpawc2010,Liu2007Target,Murthy2011multiple,Murthy2012multiple}. This is equivalent to the situation when sensors average the power measurements over a long period of time, hence effectively eliminating the effect of noise \cite{Murthy2012multiple}. Moreover, the no noise regime provides a lower bound for location estimation in the presence of uncertainty such as noise or fading. Therefore, a minimum variance unbiased (MVU) estimator of this scenario will serve as a benchmark for comparison of  the performance of all target localizers. In addition, it provides insight into the effect of parameters such as density of sensor deployment and power in the localization process regardless of the environment and the type of detection. This assumption reduces the detection problem to the question whether the target is located within a certain radius of each sensor or not. However, the estimation problem is not well behaved, as the discontinuity in the proximity function violates the regularity conditions needed for obtaining Cram\'er-Rao  bound (CRB). In addition, some papers consider just the detecting sensors for evaluation of the target location \cite{ArianSpawc2010,artes2004target}. The motivation behind such approach is that when the region of sensor deployment is much larger than the detecting radius, the number of non-detecting sensor are much higher than the detecting ones. So censoring them will save a lot of communication and processing cost \cite{Rodriguez2004}, while knowing the location of detecting ones still provides a good estimate of the target location.\\
In this paper, we first study the redundancy in information provided by detecting sensors and find out minimal sufficient statistics for them. Then we investigate the existence of an MVU  estimator for this problem. Finally, we will introduce some sub-optimal estimators with low computational complexity that perform close to optimal and compare the performance of the MVU estimator and the sub-optimal estimator with some heuristic ones through simulation. To solve the problem in each stage, we have divided the problem into two separate cases: I) when the radius of detection is known which is equivalent to the situation when propagation model and the  transmit power are known to the fusion center; and II) when the radius of detection is unknown which is equivalent to the case when propagation model or transmit power are unknown such as the case in non-cooperative localization. \\
The rest of the paper is organized as follows : section \ref{sec:ProbFormulation} is the problem formulation, section \ref{sec:SufficientStatistics} study sufficient statistics for the observation, section \ref{sec:MVUexistance} investigate the existence of MVU estimator, section \ref{sec:suboptimum} discusses some sub-optimal estimator which behaves close to optimal, section \ref{sec:SimulationResults} reports the simulation results and section \ref{sec:Conclusion} is the conclusion.

\section{Problem Formulation}
\label{sec:ProbFormulation}
Assume that a target is located at unknown location $\mathbf{z}_{\text{T}}=[x_{1_{\text{T}}},x_{2_{\text{T}}},...,x_{l_{\text{T}}}]$ in $l$ dimensional space (in practical applications $l$ is either 2 or 3) and transmits a signal whose power propagates isotropically and is attenuated monotonically as a function of distance from the target. $N$ sensors, randomly scattered in a deployment region $\mathbb{G}$, with hyper volume $\mathcal{A}(\mathbb{G})$. They measure the received power and compare it with a threshold, $\tau$, to make a binary decision about the target presence. We assume the sensors do noise free decision, which can be considered as the  limiting case when the measured power is averaged over a sufficiently long duration. Furthermore, the sensors are configured such that only the detecting sensors report their locations, $\mathbf{z}_{1},..,\mathbf{z}_{n}$, to the fusion center where the localization is performed. Since the received power is a decreasing function of distance from the target, there is a ball around the target, $\mathcal{B}_{R}(\mathbf{z}_{\text{T}})$, where all the sensors inside will detect and those outside will not. From now on we call $R$ the detection radius. We assume that $\mathbb{G}$ is sufficiently large such that $\mathcal{B}_{2R}(\mathbf{z}_{\text{T}}) \subset \mathbb{G}$. In addition, we assume that at least one sensor detects the target. Let $n$ be the number of detecting sensors ($n \geq 1$) and $\mathcal{S}=\{1,..,n\}$ be the set of indices of all detecting sensors. Therefore, $\mathcal{Z}=\{\mathbf{z}_{i} | i \in \mathcal{S} \}$ will be the set of locations of all detecting sensors and $\mathbf{Z}=(\mathbf{z}_{i} | i \in \mathcal{S} )$ denote the vector contains those locations.

\section{Sufficient Statistics}
\label{sec:SufficientStatistics}
In the this section we derive sufficient statistics for estimation of target location $\mathbf{z}_{\text{T}}$. We consider the problem in two   cases depending on whether the detection radius, $R$, is known or unknown.

\subsection{Known Detection Radius}

Let us define the \textit{possible target region} given observation $\mathbf{Z}$  as \cite{Shenoy2005,Liu2004}
\begin{eqnarray}
\label{PCRDefinitionOne}
\mathcal{T}(\mathbf{Z})=\bigcap_{i \in \mathcal{S}} \mathcal{B}_{R}(\mathbf{z}_i), \nonumber
\end{eqnarray}
Alternatively since the order of $\mathbf{Z}$ elements does not matter in this definition, we may equivalently define $\mathcal{T}$ over the $\mathcal{Z}$ set i.e. \cite{Shenoy2005,Liu2004}
\begin{eqnarray}
\label{PCRDefinitionTwo}
\mathcal{T}(\mathcal{Z})=\bigcap_{i \in \mathcal{S}} \mathcal{B}_{R}(\mathbf{z}_i), \nonumber
\end{eqnarray}
Also let us define the \textit{possible target region} with node $j$ removed as
\begin{eqnarray}
\mathcal{T}_j(\mathbf{Z})= \bigcap_{i \in \mathcal{S}, i \neq j } \mathcal{B}_{R}(\mathbf{z}_i), \nonumber
\end{eqnarray}

Clearly, we have $\mathbf{z}_{\text{T}} \in \mathcal{T}(\mathbf{Z})$ and $\mathcal{T}(\mathbf{Z}) \subseteq \mathcal{T}_j(\mathbf{Z})$ for all $j$. Moreover, any sensor whose exclusion for evaluating $\mathcal{T}$ does not effect the result (i.e. $\mathcal{T}_j(\mathbf{Z})=\mathcal{T}(\mathbf{Z})$), does not provide any additional information about the target location.  Hence, by eliminating all such sensors, we can build a new set of indices, $\mathcal{S}_{\text{MSS}}$\footnote{The subscript MSS is used, since we will proceed to show that $\mathcal{Z}_{\text{MSS}}$ is in fact the minimum sufficient statistic.}, whose sensors do contribute in shaping the \textit{possible target region}:
\begin{eqnarray}
\mathcal{S}_{\text{MSS}}=\left\{ j | \mathcal{T}_j(\mathbf{Z}) \neq \mathcal{T}(\mathbf{Z}) \right\}, \nonumber
\end{eqnarray}
Correspondingly, let $\mathcal{Z}_{\text{MSS}}$ denote the set of sensors locations whose index are in $\mathcal{S}_{\text{MSS}}$:
\begin{eqnarray}
\mathcal{Z}_{\text{MSS}}=\left\{ \textbf{z}_j | j \in  \mathcal{S}_{\text{MSS}} \right\}, \nonumber
\end{eqnarray}
and let $\mathbf{Z}_{\text{MSS}}$ be the vector format of $\mathcal{Z}_{\text{MSS}}$ i.e. ,
\begin{eqnarray}
\mathbf{Z}_{\text{MSS}}=\left( \textbf{z}_j | j \in  \mathcal{S}_{\text{MSS}} \right), \nonumber
\end{eqnarray}
We may also represent $\mathcal{Z}_{\text{MSS}}$ and $\mathbf{Z}_{\text{MSS}}$ as a function of  $\mathbf{Z}$ i.e. $\mathcal{Z}_{\text{MSS}}=\mathcal{MSS}(\mathbf{Z})$ or $\mathbf{Z}_{\text{MSS}}=MSS(\mathbf{Z})$.

Recall that we have assumed that $\mathcal{S} \ne \varnothing$. Thus, so is $\mathcal{S}_{\text{MSS}}$, and the boundary surface of $\mathcal{T}$, denoted by $\sigma(\mathcal{T})$, is composed of hyper spherical domes centered at elements of $\mathcal{Z}_{\text{MSS}}$. 


Let $\mathfrak{Z}$ be a random variable vector having a probability density function $f(\mathbf{Z};\mathbf{z}_{\text{T}})$ and $\Theta$ be the $\mathbf{z}_{\text{T}}$ parameter space. Range of $\mathfrak{Z}$ over $\Theta$ would be defined as \cite{Mittelhammer2013}
\begin{eqnarray}
\mathcal{R}_{\Theta}(\mathfrak{Z})=\{\mathbf{Z}|\exists ~ \mathbf{z}_{\text{T}} \in \Theta; \,\, f(\mathbf{Z};\boldsymbol{\theta})>0 \}. \nonumber
\end{eqnarray}

In this paper, an observation $\mathbf{Z}$ is called a possible event if  $\mathbf{Z} \in \mathcal{R}_{\Theta}(\mathbf{\mathfrak{Z}})$.

\begin{lemma}\label{onetoone}
For any pair of possible events $\mathbf{Z}$ and $\mathbf{Z}'$,
\begin{eqnarray}
\mathcal{MSS}(\mathbf{Z})=\mathcal{MSS}(\mathbf{Z}')
\Leftrightarrow
\mathcal{T}(\mathbf{Z})=\mathcal{T}(\mathbf{Z}')
.  \nonumber
\end{eqnarray}
\end{lemma}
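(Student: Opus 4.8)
The plan is to prove the two implications separately, in each case by giving an explicit rule that reconstructs one of $\mathcal{T}(\mathbf{Z})$ and $\mathcal{MSS}(\mathbf{Z})$ from the other, so that each is a function of the other and the biconditional follows. For the implication $\mathcal{T}(\mathbf{Z})=\mathcal{T}(\mathbf{Z}')\Rightarrow\mathcal{MSS}(\mathbf{Z})=\mathcal{MSS}(\mathbf{Z}')$, I would show that $\mathcal{MSS}(\mathbf{Z})$, i.e.\ the set $\mathcal{Z}_{\text{MSS}}$ of contributing sensor locations, is determined by $\mathcal{T}(\mathbf{Z})$ alone. By the observation preceding the lemma, $\sigma(\mathcal{T}(\mathbf{Z}))$ is the union of the hyperspherical domes $\partial\mathcal{B}_R(\mathbf{z})\cap\sigma(\mathcal{T}(\mathbf{Z}))$ over $\mathbf{z}\in\mathcal{Z}_{\text{MSS}}(\mathbf{Z})$, each of positive $(l-1)$-dimensional surface measure (the latter being exactly what makes a sensor contributing; it is seen by joining a point of $\mathcal{T}_j(\mathbf{Z})\setminus\mathcal{T}(\mathbf{Z})$ to an interior point of $\mathcal{T}(\mathbf{Z})$ and inspecting where the segment crosses $\partial\mathcal{B}_R(\mathbf{z}_j)$). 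Since two spheres of the common radius $R$ cannot agree on a set of positive surface measure unless their centers coincide, the center of each dome is recoverable from $\sigma(\mathcal{T}(\mathbf{Z}))$, whence
\[
\mathcal{Z}_{\text{MSS}}(\mathbf{Z})=\{\mathbf{z}:\ \partial\mathcal{B}_R(\mathbf{z})\ \text{contains a relatively open subset of}\ \sigma(\mathcal{T}(\mathbf{Z}))\},
\]
which is manifestly a function of $\mathcal{T}(\mathbf{Z})$. This is the easier direction.

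For the converse, $\mathcal{MSS}(\mathbf{Z})=\mathcal{MSS}(\mathbf{Z}')\Rightarrow\mathcal{T}(\mathbf{Z})=\mathcal{T}(\mathbf{Z}')$, I would prove the identity
\[
\mathcal{T}(\mathbf{Z})=\bigcap_{\mathbf{z}\in\mathcal{Z}_{\text{MSS}}(\mathbf{Z})}\mathcal{B}_R(\mathbf{z}),
\]
i.e.\ that the non-contributing sensors may all be discarded \emph{simultaneously} without enlarging the possible target region; the implication is then immediate because the right-hand side depends only on $\mathcal{MSS}(\mathbf{Z})$. The inclusion ``$\subseteq$'' is trivial. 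For ``$\supseteq$'', suppose $\mathbf{p}$ lies in the right-hand intersection but not in $\mathcal{T}(\mathbf{Z})$, and let $\mathbf{p}_0$ be the nearest point of the compact convex set $\mathcal{T}(\mathbf{Z})$ to $\mathbf{p}$, so $\mathbf{p}_0\in\sigma(\mathcal{T}(\mathbf{Z}))$ and $\mathbf{p}-\mathbf{p}_0$ lies in the normal cone of $\mathcal{T}(\mathbf{Z})$ at $\mathbf{p}_0$. Expressing $\mathbf{p}-\mathbf{p}_0$ as a nonnegative combination of the outward normals $\mathbf{p}_0-\mathbf{z}$ of the domes passing through $\mathbf{p}_0$ (whose centers all belong to $\mathcal{Z}_{\text{MSS}}(\mathbf{Z})$) and taking the inner product of both sides with $\mathbf{p}-\mathbf{p}_0$ shows that $\langle\mathbf{p}-\mathbf{p}_0,\mathbf{p}_0-\mathbf{z}\rangle>0$ for at least one such center $\mathbf{z}$; then, using $\|\mathbf{p}-\mathbf{z}\|^2=\|\mathbf{p}-\mathbf{p}_0\|^2+2\langle\mathbf{p}-\mathbf{p}_0,\mathbf{p}_0-\mathbf{z}\rangle+R^2$, one gets $\|\mathbf{p}-\mathbf{z}\|>R$ for that $\mathbf{z}\in\mathcal{Z}_{\text{MSS}}(\mathbf{Z})$, contradicting $\mathbf{p}\in\mathcal{B}_R(\mathbf{z})$.

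I expect the main obstacle to be precisely this normal-cone step: one must know that the normal cone of $\mathcal{T}(\mathbf{Z})$ at a boundary point is generated only by the outward normals of the \emph{contributing} balls, so that a non-contributing ball whose sphere happens to pass through $\mathbf{p}_0$ adds nothing new to it. This is where the special structure of intersections of equal-radius balls (as opposed to arbitrary convex bodies) is used, through the dome decomposition of $\sigma(\mathcal{T})$ quoted above. I would also dispose of the degenerate configurations separately --- $\mathcal{T}(\mathbf{Z})$ with empty interior (for instance two detecting sensors exactly $2R$ apart, which collapses $\mathcal{T}$ to a single point) or two sensors reporting identical locations --- either by a direct elementary argument in those low-dimensional cases, or by noting that for a continuous sensor distribution such $\mathbf{Z}$ are non-generic and may be removed from $\mathcal{R}_{\Theta}(\mathfrak{Z})$ up to a set of measure zero.
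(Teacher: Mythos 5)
Your proposal is correct in outline, and its two halves match the paper's proof but with the perceived difficulty reversed. For the direction $\mathcal{T}(\mathbf{Z})=\mathcal{T}(\mathbf{Z}')\Rightarrow\mathcal{MSS}(\mathbf{Z})=\mathcal{MSS}(\mathbf{Z}')$ you and the paper use the same key fact, namely that a sufficiently large piece of a sphere of known radius $R$ determines its center: the paper picks $l+1$ affinely independent points on the shared dome, you argue via a relatively open subset of positive surface measure; these are interchangeable, and your version of this direction is sound (given the dome decomposition of $\sigma(\mathcal{T})$ that the paper asserts just before the lemma). The genuine divergence is in the converse. The paper dismisses $\mathcal{MSS}(\mathbf{Z})=\mathcal{MSS}(\mathbf{Z}')\Rightarrow\mathcal{T}(\mathbf{Z})=\mathcal{T}(\mathbf{Z}')$ as ``obvious from the construction,'' which silently presupposes the identity $\mathcal{T}(\mathbf{Z})=\bigcap_{\mathbf{z}\in\mathcal{Z}_{\text{MSS}}}\mathcal{B}_R(\mathbf{z})$, i.e., that sensors declared redundant \emph{one at a time} may all be discarded \emph{simultaneously}. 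You are right that this is the nontrivial content of that direction: there are configurations (two sensors at the same location, or $\mathcal{T}$ with empty interior) in which individually redundant constraints are jointly necessary, which is exactly why your separate handling of degenerate events is needed and why the paper's ``obvious'' is too quick. What your write-up buys, then, is an honest account of the step the paper skips; what it still owes is the crux you yourself flag --- that the normal cone of $\mathcal{T}$ at a boundary point is generated by the outward normals of the \emph{contributing} balls alone. Until that sub-claim is discharged, this half of your argument is a well-aimed plan rather than a complete proof, so you should either prove it (using the dome decomposition, as you suggest) or replace the normal-cone step with an inductive one-sensor-at-a-time removal argument valid when $\mathcal{T}$ has nonempty interior.
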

\begin{proof}
The forward direction is obvious from the construction of $\mathcal{MSS}(\mathbf{Z})$ and $\mathcal{MSS}(\mathbf{Z}')$.
For the backward direction let's assume that it is not true. Then,
\begin{eqnarray}
\exists X,Y \in \mathcal{R}_{\Theta}(\mathbf{\mathfrak{Z}}) ~,~ \mathcal{MSS}(X) \neq \mathcal{MSS}(Y) ~\&~\mathcal{T}(X)=\mathcal{T}(Y) \nonumber
\end{eqnarray}
Thus, there is at least one uncommon element between $\mathcal{MSS}(X)$ and $ \mathcal{MSS}(Y)$. Without loss of generality we assume it to be $\mathbf{z_p} \in \mathcal{MSS}(X)$ and $\mathbf{z_p} \notin  \mathcal{MSS}(Y)$. Let $\sigma$ represents the surface of a set. $\sigma(\mathcal{B}_{R,\mathbf{z_p}}) \cap \sigma (\mathcal{T}(X))$ is a segment of a hyper sphere dome and $l+1$ points can be selected on it such that they are not located in the same hyper plane. Since $\mathcal{T}(X)=\mathcal{T}(Y)$, $\sigma(\mathcal{B}_{R,\mathbf{z_p}}) \cap \sigma(\mathcal{T}(Y))$ is also a segment of the same dome and contains those $l+1$ points. But any $l+1$ points on a dome will determine exactly one sensor location belonging to $\mathcal{MSS}(Y)$ which has $R$ distance from them all. Thus, $\mathbf{z_p} \in \mathcal{MSS}(Y)$ which contradicts our assumption and the proof is complete.
\end{proof}

\begin{theo}
\label{theoMinimalSuffSt}
$\mathcal{Z}_{\text{MSS}}$ is a minimal sufficient statistic for the estimation of target location.
\end{theo}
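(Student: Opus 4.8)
The plan is to compute the likelihood in closed form, read off sufficiency of $\mathbf{Z}_{\text{MSS}}$ from the Fisher--Neyman factorization theorem, and then obtain minimality from the standard likelihood-ratio (Lehmann--Scheff\'e) criterion together with Lemma~\ref{onetoone}.

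First I would write $f(\mathbf{Z};\mathbf{z}_{\text{T}})$ explicitly. The $N$ sensors are scattered independently and uniformly over $\mathbb{G}$, a sensor detects exactly when it lies in $\mathcal{B}_R(\mathbf{z}_{\text{T}})$, and since $\mathcal{B}_{2R}(\mathbf{z}_{\text{T}})\subset\mathbb{G}$ the detection ball sits entirely inside $\mathbb{G}$, so its hyper volume equals a constant $V_R$ independent of $\mathbf{z}_{\text{T}}$. Summing over which $n$ of the $N$ sensors detect, integrating out the $N-n$ non-detecting ones, and using $\prod_{i\in\mathcal{S}}\mathbf{1}_{\mathbf{z}_i\in\mathcal{B}_R(\mathbf{z}_{\text{T}})}=\mathbf{1}_{\mathbf{z}_{\text{T}}\in\bigcap_{i\in\mathcal{S}}\mathcal{B}_R(\mathbf{z}_i)}=\mathbf{1}_{\mathbf{z}_{\text{T}}\in\mathcal{T}(\mathbf{Z})}$, I obtain
\[
f(\mathbf{Z};\mathbf{z}_{\text{T}})=c(\mathbf{Z})\,\mathbf{1}_{\mathbf{z}_{\text{T}}\in\mathcal{T}(\mathbf{Z})},\qquad
c(\mathbf{Z})=\frac{1}{\Pr\{n\ge 1\}}\binom{N}{n}\Big(\tfrac{1}{\mathcal{A}(\mathbb{G})}\Big)^{n}\Big(\tfrac{\mathcal{A}(\mathbb{G})-V_R}{\mathcal{A}(\mathbb{G})}\Big)^{N-n},
\]
where $c(\mathbf{Z})>0$ depends on the observation only through $n$ and never on $\mathbf{z}_{\text{T}}$; the only feature used below is $f(\mathbf{Z};\mathbf{z}_{\text{T}})\propto\mathbf{1}_{\mathbf{z}_{\text{T}}\in\mathcal{T}(\mathbf{Z})}$ with a strictly positive, $\mathbf{z}_{\text{T}}$-free factor.

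For \textbf{sufficiency}, observe that by the very construction of $\mathcal{S}_{\text{MSS}}$ the nodes it discards are redundant for the intersection, so $\mathcal{T}(\mathbf{Z})=\mathcal{T}(\mathbf{Z}_{\text{MSS}})$; I would verify this irredundancy statement for intersections of equal-radius balls. Then $f(\mathbf{Z};\mathbf{z}_{\text{T}})=\mathbf{1}_{\mathbf{z}_{\text{T}}\in\mathcal{T}(MSS(\mathbf{Z}))}\cdot c(\mathbf{Z})$ is already of the form $g(MSS(\mathbf{Z}),\mathbf{z}_{\text{T}})\,h(\mathbf{Z})$ demanded by Fisher--Neyman, so $\mathbf{Z}_{\text{MSS}}$ (equivalently the set $\mathcal{Z}_{\text{MSS}}=\mathcal{MSS}(\mathbf{Z})$, element order being immaterial) is sufficient. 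For \textbf{minimality} I would invoke the criterion that a sufficient statistic is minimal precisely when its level sets coincide with the classes of the relation $\mathbf{Z}\sim\mathbf{Z}'\Leftrightarrow f(\mathbf{Z};\mathbf{z}_{\text{T}})/f(\mathbf{Z}';\mathbf{z}_{\text{T}})$ is a finite constant free of $\mathbf{z}_{\text{T}}$ throughout $\mathcal{R}_{\Theta}(\mathfrak{Z})$. By the displayed likelihood, $\mathbf{Z}\sim\mathbf{Z}'$ forces the two $\{0,1\}$-valued maps $\mathbf{z}_{\text{T}}\mapsto\mathbf{1}_{\mathbf{z}_{\text{T}}\in\mathcal{T}(\mathbf{Z})}$ and $\mathbf{z}_{\text{T}}\mapsto\mathbf{1}_{\mathbf{z}_{\text{T}}\in\mathcal{T}(\mathbf{Z}')}$ to have equal support, i.e. $\mathcal{T}(\mathbf{Z})=\mathcal{T}(\mathbf{Z}')$; conversely, when $\mathcal{T}(\mathbf{Z})=\mathcal{T}(\mathbf{Z}')$ the ratio is the constant $c(\mathbf{Z})/c(\mathbf{Z}')$. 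Hence $\mathbf{Z}\sim\mathbf{Z}'\Leftrightarrow\mathcal{T}(\mathbf{Z})=\mathcal{T}(\mathbf{Z}')$, and Lemma~\ref{onetoone} turns this into $\mathbf{Z}\sim\mathbf{Z}'\Leftrightarrow\mathcal{MSS}(\mathbf{Z})=\mathcal{MSS}(\mathbf{Z}')$. Thus the equivalence classes of $\sim$ are exactly the level sets of $\mathbf{Z}\mapsto\mathcal{Z}_{\text{MSS}}$, which is the assertion.

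The conceptual content is light once the likelihood is in hand; the care goes into two places. One is making the density argument rigorous on a sample space of variable dimension (a union over $n=1,\dots,N$) in which only the detecting sensors are observed, so that the factorization and Lehmann--Scheff\'e theorems are applied on this mixed space rather than on a fixed Euclidean one. The other---which I expect to be the real obstacle---is the geometric fact that simultaneously deleting \emph{every} individually-redundant ball of the common radius $R$ still leaves the intersection $\mathcal{T}$ unchanged, so that $\mathcal{T}(\mathbf{Z}_{\text{MSS}})=\mathcal{T}(\mathbf{Z})$; this leans on the equal-radius structure and on ruling out degenerate ties (such as coincident sensor locations) among possible events.
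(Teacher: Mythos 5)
Your proposal is correct and follows essentially the same route as the paper: compute the likelihood explicitly, observe $f(\mathbf{Z};\mathbf{z}_{\text{T}})\propto\mathbf{1}_{\mathbf{z}_{\text{T}}\in\mathcal{T}(\mathbf{Z})}$ with a positive $\mathbf{z}_{\text{T}}$-free factor, get sufficiency from Fisher--Neyman, and transfer minimality between $\mathcal{T}(\mathbf{Z})$ and $\mathcal{Z}_{\text{MSS}}$ via Lemma~\ref{onetoone}. The only cosmetic difference is in the minimality step---you invoke the likelihood-ratio partition criterion, while the paper argues directly from the definition that $\mathcal{T}$ must be a function of any other sufficient statistic $\mathcal{U}$ (by contradiction, using the factorization of $\mathcal{U}$ and the fact that $\mathbf{Z}_2\in\mathcal{R}_{\Theta}(\mathfrak{Z})$)---and the paper sidesteps your flagged worry about $\mathcal{T}(\mathbf{Z})=\mathcal{T}(\mathbf{Z}_{\text{MSS}})$ by working with $\mathcal{T}(\mathbf{Z})$ throughout and only using the partition equivalence of Lemma~\ref{onetoone} at the end.
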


\begin{proof}
Due to Lemma \ref{onetoone}, it suffices to show that $\mathcal{T}(\mathbf{Z})$ is a minimal sufficient statistics for the estimation of target location.
The probability density function that the detecting sensor will be located at, $\mathbf{Z}$, can be described as
\begin{eqnarray}
\label{fZmain}
f(\mathbf{Z};\mathbf{z}_{\text{T}}) = \left\{ \begin{array}{lcl}
 \left(\frac{1}{\mathcal{A}(\mathbb{G})}  \right)^{n(\mathbf{Z})}    \left(\frac{\mathcal{A}(\mathbb{G})-\mathcal{A}(\mathcal{B}_{R}(\mathbf{\mathbf{z}_{\text{T}}}))    }{\mathcal{A}(\mathbb{G})}  \right)^{N-n(\mathbf{Z})}    &
\mathbf{z}_{\text{T}} \in \mathcal{T}(\mathbf{Z}) \\
0  &  \mathbf{z}_{\text{T}} \notin \mathcal{T}(\mathbf{Z})  \\
\end{array}\right.,
\end{eqnarray}
where $n(\mathbf{Z})$ represents the number of elements of vector $\mathbf{Z}$ and $\mathcal{A}(.)$ represents the hyper volume of the shape (or simply the area in two dimensional space).  Equation (\ref{fZmain}) can be rewritten as following which permits the Neyman-Fisher factorization
\begin{eqnarray}
f(\mathbf{Z};\mathbf{z}_{\text{T}}) &=& \left(\frac{1}{\mathcal{A}(\mathbb{G})}\right)^{n(\mathbf{Z})} \left( \frac{\mathcal{A}(\mathbb{G})-\mathcal{A}(\mathcal{B}_{R}(\mathbf{z}_{\text{T}}))}{\mathcal{A}(\mathbb{G})}\right)^{N-n(\mathbf{Z})} \mathbf{1}_{\mathbf{z}_{\text{T}} \in \mathcal{T}(\mathbf{Z})} \nonumber\\
&=& h(\mathbf{Z}) g(\mathbf{z}_{\text{T}},\mathcal{T}(\mathbf{Z})), \nonumber
\end{eqnarray}
where $\mathbf{1}_X$ is the indicator function of $X$, $h(\mathbf{Z})=\left(\frac{1}{\mathcal{A}(\mathbb{G})}\right)^{n(\mathbf{Z})} \left( \frac{\mathcal{A}(\mathbb{G})-\mathcal{A}(\mathcal{B}_{R}(\mathbf{z}_{\text{T}}))}{\mathcal{A}(\mathbb{G})}\right)^{N-n(\mathbf{Z})}$ and $g(\mathbf{z}_{\text{T}},\mathcal{T}(\mathbf{Z}))=\mathbf{1}_{\mathbf{z}_{\text{T}} \in \mathcal{T}(\mathbf{Z})}$. Thus, $\mathcal{T}(\mathbf{Z})$ is a sufficient statistics for estimation of $\mathbf{z}_{\text{T}}$. \\

To prove that it is minimal we consider another sufficient statistics $\mathcal{U}(.,.)$ and show that $\mathcal{T}$ is a function of $\mathcal{U}$ through showing that
\begin{eqnarray}
\forall ~~ \mathbf{Z}_1, \mathbf{Z}_2 \in \mathcal{R}_{\Theta}(\mathfrak{Z}), ~~ \mathcal{U}(\mathbf{Z}_1)=\mathcal{U}(\mathbf{Z}_2) \Rightarrow  \mathcal{T}(\mathbf{Z}_1)=\mathcal{T}(\mathbf{Z}_2) \label{eq:minimalU}
\end{eqnarray}
where $\mathfrak{Z}$ represents the random variable vector for observations and $\mathbf{Z}_1$ and $\mathbf{Z}_2$ are two instances of that random variable vector. Assume that \eqref{eq:minimalU} does not hold, then
\begin{eqnarray}
\exists ~~ \mathbf{Z}_1, \mathbf{Z}_2 \in \mathcal{R}_{\Theta}(\mathfrak{Z});  \mathcal{T}(\mathbf{Z}_1) \neq \mathcal{T}(\mathbf{Z}_2) \mbox{ and } \mathcal{U}(\mathbf{Z}_1)=\mathcal{U}(\mathbf{Z}_2). eq:minimalU
\end{eqnarray}
Assume that
$\mathbf{z}_{\text{T}} \in  \left((\mathcal{T}(\mathbf{Z}_1)-\mathcal{T}(\mathbf{Z}_2)) \cup (\mathcal{T}(\mathbf{Z}_2) - \mathcal{T}(\mathbf{Z}_1))   \right)$. Without loss of generality we assume that $\mathbf{z}_{\text{T}} \in \mathcal{T}(\mathbf{Z}_1)-\mathcal{T}(\mathbf{Z}_2)$. Then $f(\mathbf{Z}_1;\mathbf{z}_{\text{T}})\neq0$ and $ f(\mathbf{Z}_2;\mathbf{z}_{\text{T}})=0$. \\
However, due to Fisher Neyman factorization we have
\begin{eqnarray}
f(\mathbf{Z}_1;\mathbf{z}_{\text{T}}) & = & h_{\mathcal{U}}(\mathbf{Z}_1) g_{\mathcal{U}}(\mathbf{z}_{\text{T}},\mathcal{U}(\mathbf{Z}_1)) \neq 0\nonumber \\
f(\mathbf{Z}_2;\mathbf{z}_{\text{T}}) & = & h_{\mathcal{U}}(\mathbf{Z}_2) g_{\mathcal{U}}(\mathbf{z}_{\text{T}},\mathcal{U}(\mathbf{Z}_2))=h_{\mathcal{U}}(\mathbf{Z}_2) g_{\mathcal{U}}(\mathbf{z}_{\text{T}},\mathcal{U}(\mathbf{Z}_1)) = 0\nonumber
\end{eqnarray}
Since the first equation is non zero, therefore \footnote{sub-script $\mathcal{U}$ is used to differentiate the factorization for sufficient statistics $\mathcal{U}$ from the one done for sufficient statistics $\mathcal{T}$} $g_{\mathcal{U}}(\mathbf{z}_{\text{T}},\mathcal{U}(\mathbf{Z}_1)) \neq 0$. Hence $h_{\mathcal{U}}(\mathbf{Z}_2)=0$, which means that $\forall \mathbf{z}_{\text{T}} , f(\mathbf{Z}_2;\mathbf{z}_{\text{T}})=0 $ i.e. $\mathbf{Z}_2 \notin \mathcal{R}_{\Theta}(\mathfrak{Z})$ (due to the definition of $\mathcal{R}_{\Theta}(\mathfrak{Z})$) which contradicts our assumption that $\mathbf{Z}_2 \in \mathcal{R}_{\Theta}(\mathfrak{Z})$ and the proof is complete.
\end{proof}

\begin{remark}
\label{rem:notcomplete1}
$\mathcal{Z}_{\text{MSS}}$ is not complete. To show this, we provide a measurable function $\mathbf{g}(.)\ne \mathbf{0}$ where $E[\mathbf{g}(\mathcal{Z}_{\text{MSS}})]=\mathbf{0}$ for all $\mathbf{z}_{\text{T}}$. Define
the centroid (average) of $\mathcal{Z}_{\text{MSS}}$ by $\mathbf{\bar{z}}=\frac{1}{|\mathcal{Z}_{\text{MSS}}|} \sum_{i \in \mathcal{S}_{\text{MSS}}} \mathbf{z}_{i}$, and the center of mass of convex hull of $\mathcal{Z}_{\text{MSS}}$ by $\mathbf{\tilde{z}}=\frac{1}{\int_{\mathcal{C}(\mathcal{Z}_{\text{MSS}})} dv} \int_{\mathcal{C}(\mathcal{Z}_{\text{MSS}})} \mathbf{z} dv$ where $\mathcal{C}(\mathcal{Z}_{\text{MSS}})$ represent convex hull of the set, $\mathcal{Z}_{\text{MSS}}$, and $\int \mathbf{z} dv$ is the integral over hyper volume. Since $\mathbb{G}$ is assumed to be sufficiently large, the distribution of the location of detecting sensors is isotropic around the target. Thus, so is the distribution of $\mathbf{\bar{z}}$ and $\mathbf{\tilde{z}}$. Thus $E[\mathbf{\bar{z}}]=E[\mathbf{\tilde{z}}]=\mathbf{z}_{\text{T}}$. In general, when dimension of the space is higher than 1, $\mathbf{\bar{z}} \neq \mathbf{\tilde{z}}$. Consequently, for $\mathbf{g}(\mathcal{Z}_{\text{MSS}})=\mathbf{\bar{z}}-\mathbf{\tilde{z}}$ we have $E[\mathbf{g}(\mathcal{Z}_{\text{MSS}})]=E[\mathbf{\bar{z}}-\mathbf{\tilde{z}}]=\mathbf{0}$ for all $\mathbf{z}_{\text{T}}$ but $\mathbf{\bar{z}}-\mathbf{\tilde{z}} \neq \mathbf{0}$.\\
In one dimensional space, the $\mathcal{Z}_{\text{MSS}}$ simply becomes the maximum and minimum of $\mathbf{z}_i \in \mathcal{Z}$. An equivalent problem to this case has been studied in \cite{David2003OrderStatistics} and the conclusion is that when $R$ is known, $\mathcal{Z}_{\text{MSS}}$ is not complete.
\end{remark}

%
%

\begin{theo}
\label{theoconvex}
If $\mathcal{Z} \neq \emptyset $, then $\mathcal{T}(\mathbf{Z})$ is convex.
\end{theo}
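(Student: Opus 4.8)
The plan is to exploit the fact that, by its very definition, $\mathcal{T}(\mathbf{Z})=\bigcap_{i\in\mathcal{S}}\mathcal{B}_{R}(\mathbf{z}_i)$ is an intersection of Euclidean balls, and that intersections of convex sets are convex. So the first step is to verify that each ball $\mathcal{B}_{R}(\mathbf{z}_i)$ is convex: given $\mathbf{a},\mathbf{b}\in\mathcal{B}_{R}(\mathbf{z}_i)$ and $\lambda\in[0,1]$, the triangle inequality and positive homogeneity of the norm give $\|\lambda\mathbf{a}+(1-\lambda)\mathbf{b}-\mathbf{z}_i\| \le \lambda\|\mathbf{a}-\mathbf{z}_i\| + (1-\lambda)\|\mathbf{b}-\mathbf{z}_i\| \le \lambda R + (1-\lambda) R = R$, so the entire segment $[\mathbf{a},\mathbf{b}]$ lies in $\mathcal{B}_{R}(\mathbf{z}_i)$.

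The second step is the routine observation that an arbitrary intersection of convex sets is convex: if $\mathbf{a},\mathbf{b}\in\mathcal{T}(\mathbf{Z})$, then for every $i\in\mathcal{S}$ we have $\mathbf{a},\mathbf{b}\in\mathcal{B}_{R}(\mathbf{z}_i)$, hence by the first step $[\mathbf{a},\mathbf{b}]\subseteq\mathcal{B}_{R}(\mathbf{z}_i)$; since this holds for all $i$, the segment lies in the intersection $\mathcal{T}(\mathbf{Z})$. The hypothesis $\mathcal{Z}\neq\emptyset$ (equivalently $\mathcal{S}\neq\emptyset$) guarantees the index set is nonempty, and combined with $\mathbf{z}_{\text{T}}\in\mathcal{T}(\mathbf{Z})$ it shows $\mathcal{T}(\mathbf{Z})$ is a nonempty convex region rather than vacuously the whole space.

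There is essentially no real obstacle here; the argument is a two-line consequence of norm convexity. The only point worth being explicit about is whether $\mathcal{B}_{R}(\cdot)$ denotes the open or the closed ball — the convexity computation above is identical in either case, so the conclusion is unaffected. If desired, one may append the observation that $\mathcal{T}(\mathbf{Z})$ is additionally bounded, being contained in every single $\mathcal{B}_{R}(\mathbf{z}_i)$, and (in the closed-ball convention) compact, which is convenient for later defining the center-of-mass estimator on $\mathcal{T}(\mathbf{Z})$.
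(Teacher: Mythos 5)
Your proof is correct and follows essentially the same route as the paper's: both reduce the claim to the convexity of each ball $\mathcal{B}_{R}(\mathbf{z}_i)$ together with the fact that an intersection of convex sets is convex (the paper phrases this as an induction over the sensors, you argue the arbitrary intersection directly, but the substance is identical). Your version is if anything slightly more complete, since you actually verify the convexity of a ball via the triangle inequality rather than asserting it.
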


\begin{proof} We draft the proof by mathematical induction. Define $\mathbf{Z}_k=(\mathbf{z}_1,..,\mathbf{z}_k)$ where $k < n$. Because $\mathcal{Z} \neq \emptyset$, $\mathcal{T}(\mathbf{Z}_1)$ is a ball and hence is convex. If $\mathcal{T}(\mathbf{Z}_k)$ is convex, so does   $\mathcal{T}(\mathbf{Z}_{k+1})=\mathcal{T}(\mathbf{Z}_k) \cap \mathcal{B}_{R}(\mathbf{z}_{k+1})$ because both $\mathcal{T}(\mathbf{Z}_k)$ and $\mathcal{B}_{R}(\mathbf{z}_{k+1})$ are convex.
\end{proof}
%

\subsection{Unknown Detection Radius}
In this sub section we assume that $R$ is unknown. This implies that the target's transmit power is not known. Let the polytope $\mathcal{C}(\mathcal{Z})=\{ \sum_i \lambda_i \mathbf{z}_i| \mathbf{z}_i \in \mathcal{Z}, \sum_i \lambda_i=1, \lambda_i \geq0  \}$ be the convex hull of the locations of the detecting sensors \cite{boyd2004convex}. Moreover, let $\mathcal{W}=\{\mathbf{w}_i\} \subset \mathcal{Z} $ be the set of locations of corners of $\mathcal{C}(\mathcal{Z})$.

\begin{lemma}\label{lemma:D}
Any sensors located in $\mathcal{C}(\mathcal{Z})$ is a detecting sensor.
\end{lemma}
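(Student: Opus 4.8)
The plan is to exploit the convexity of the detection ball. By the problem setup, a sensor located at a point $\mathbf{z}$ detects the target exactly when $\mathbf{z} \in \mathcal{B}_R(\mathbf{z}_{\text{T}})$, i.e. when $\|\mathbf{z}-\mathbf{z}_{\text{T}}\| \le R$. In particular, since every element of $\mathcal{Z}$ is the location of a detecting sensor, we have $\mathbf{z}_i \in \mathcal{B}_R(\mathbf{z}_{\text{T}})$ for all $i \in \mathcal{S}$, and hence $\mathcal{Z} \subseteq \mathcal{B}_R(\mathbf{z}_{\text{T}})$.

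Next I would invoke that a Euclidean ball is a convex set and that the convex hull of a set is the smallest convex set containing it; therefore $\mathcal{C}(\mathcal{Z}) \subseteq \mathcal{B}_R(\mathbf{z}_{\text{T}})$. If one prefers an explicit verification, take an arbitrary $\mathbf{z} = \sum_i \lambda_i \mathbf{z}_i \in \mathcal{C}(\mathcal{Z})$ with $\lambda_i \ge 0$ and $\sum_i \lambda_i = 1$; then by the triangle inequality $\|\mathbf{z}-\mathbf{z}_{\text{T}}\| = \|\sum_i \lambda_i(\mathbf{z}_i - \mathbf{z}_{\text{T}})\| \le \sum_i \lambda_i \|\mathbf{z}_i-\mathbf{z}_{\text{T}}\| \le R \sum_i \lambda_i = R$, so $\mathbf{z} \in \mathcal{B}_R(\mathbf{z}_{\text{T}})$.

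Finally, any sensor whose location $\mathbf{z}$ lies in $\mathcal{C}(\mathcal{Z})$ thus satisfies $\|\mathbf{z}-\mathbf{z}_{\text{T}}\| \le R$, i.e. it falls inside the detection ball of the target and therefore detects it, which is the assertion. There is no real obstacle here: the lemma is an immediate consequence of the convexity of balls, and the only point worth a parenthetical remark is that the conclusion is insensitive to whether the detection ball is treated as open or closed, since the inequality argument respects either convention.
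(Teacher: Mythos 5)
Your proof is correct and follows essentially the same route as the paper: both establish that every point of $\mathcal{C}(\mathcal{Z})$ lies in $\mathcal{B}_R(\mathbf{z}_{\text{T}})$ via the triangle inequality applied to a convex combination (the paper writes the combination over the corners $\mathbf{w}_j$, you over the $\mathbf{z}_i$, which is immaterial). The added remark that one could instead just cite convexity of the ball and minimality of the convex hull is a harmless repackaging of the same argument.
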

\begin{proof}
Since $\mathbf{z}_i \in \mathcal{C}(\mathcal{Z})$, it can be represented as $\mathbf{z}_i= \sum_j \lambda_j \mathbf{w}_j $ where $\sum_j \lambda_j=1 $ and $\lambda_j \geq 0$. Thus,
\begin{eqnarray}
\| \mathbf{z}_i-\mathbf{z}_{\text{T}}\| =\left\|\sum_j \lambda_j \mathbf{w}_j-\mathbf{z}_{\text{T}} \sum_j \lambda_j\right\|=\left\|\sum_j \lambda_j (\mathbf{w}_j-\mathbf{z}_{\text{T}})\right\|  \leq \sum_j \lambda_j  \| \mathbf{w}_j-\mathbf{z}_{\text{T}} \|  \leq \sum_j \lambda_j R=R. \nonumber
\end{eqnarray}
\end{proof}

\begin{theo}\label{SSunknownR}
$\mathcal{W}$ is a sufficient statistic for estimating $\mathbf{z}_{\text{T}}$ and $R$.
\end{theo}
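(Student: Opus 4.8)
The plan is to apply the Neyman--Fisher factorization criterion to the joint parameter $(\mathbf{z}_{\text{T}},R)$. Repeating the elementary counting argument behind \eqref{fZmain}, but now treating $R$ as a free parameter, the density of the reported locations is, on its support $\mathcal{R}_{\Theta}(\mathfrak{Z})$,
\begin{eqnarray}
f(\mathbf{Z};\mathbf{z}_{\text{T}},R)=\left(\frac{1}{\mathcal{A}(\mathbb{G})}\right)^{n(\mathbf{Z})}\left(\frac{\mathcal{A}(\mathbb{G})-\mathcal{A}(\mathcal{B}_{R}(\mathbf{z}_{\text{T}}))}{\mathcal{A}(\mathbb{G})}\right)^{N-n(\mathbf{Z})}\mathbf{1}_{\mathbf{z}_{\text{T}}\in\mathcal{T}(\mathbf{Z})}.\nonumber
\end{eqnarray}
The goal is to exhibit this as $h(\mathbf{Z})\,g(\mathcal{W};\mathbf{z}_{\text{T}},R)$, i.e.\ to show that every data-dependent piece of $f$ can be read off from the corner set $\mathcal{W}$ (together with the count $n(\mathbf{Z})$, which the fusion center has in any case as the number of received reports). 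Note first that $\mathcal{A}(\mathcal{B}_{R}(\mathbf{z}_{\text{T}}))$ is just a constant times $R^{l}$, hence depends on $R$ alone and not on $\mathbf{Z}$.

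The key step is to show that the indicator $\mathbf{1}_{\mathbf{z}_{\text{T}}\in\mathcal{T}(\mathbf{Z})}$ is a function of $\mathcal{W}$ and $(\mathbf{z}_{\text{T}},R)$ only. Since $\mathcal{T}(\mathbf{Z})=\bigcap_{i\in\mathcal{S}}\mathcal{B}_{R}(\mathbf{z}_i)$, the relation $\mathbf{z}_{\text{T}}\in\mathcal{T}(\mathbf{Z})$ is exactly the statement that $\|\mathbf{z}_i-\mathbf{z}_{\text{T}}\|\le R$ for every $i\in\mathcal{S}$. Because $\mathcal{W}\subseteq\mathcal{Z}$, this in particular forces $\|\mathbf{w}_j-\mathbf{z}_{\text{T}}\|\le R$ for every corner $\mathbf{w}_j$. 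Conversely, every $\mathbf{z}_i$ is a convex combination $\mathbf{z}_i=\sum_j\lambda_j\mathbf{w}_j$ of the corners, so by the triangle inequality --- the same computation as in the proof of Lemma~\ref{lemma:D}, now carried out for an arbitrary candidate point in place of the true target --- one gets $\|\mathbf{z}_i-\mathbf{z}_{\text{T}}\|\le\sum_j\lambda_j\|\mathbf{w}_j-\mathbf{z}_{\text{T}}\|\le R$ whenever all corners lie within $R$ of $\mathbf{z}_{\text{T}}$. Hence
\begin{eqnarray}
\mathbf{1}_{\mathbf{z}_{\text{T}}\in\mathcal{T}(\mathbf{Z})}=\mathbf{1}_{\mathbf{z}_{\text{T}}\in\bigcap_{\mathbf{w}_j\in\mathcal{W}}\mathcal{B}_{R}(\mathbf{w}_j)},\nonumber
\end{eqnarray}
which depends on $\mathbf{Z}$ only through $\mathcal{W}$.

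Putting the pieces together, $f(\mathbf{Z};\mathbf{z}_{\text{T}},R)$ is then already entirely a function of $\big(\mathcal{W},n(\mathbf{Z})\big)$ and $(\mathbf{z}_{\text{T}},R)$, so the factorization holds trivially with $h(\mathbf{Z})\equiv 1$ and $g$ equal to the displayed product of the binomial factor and the intersection indicator; by Neyman--Fisher, $\mathcal{W}$ (retaining the count $n$) is sufficient for $(\mathbf{z}_{\text{T}},R)$. I expect the only real obstacle to be exactly this bookkeeping around $n$: the reduction to the corner set must not silently discard the number of detecting sensors, since that is the part of the observation that informs $R$ through the factor $\big(\frac{\mathcal{A}(\mathbb{G})-\mathcal{A}(\mathcal{B}_{R})}{\mathcal{A}(\mathbb{G})}\big)^{N-n}$; the geometric reduction itself is routine once the convex-combination inequality of Lemma~\ref{lemma:D} is available. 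Note that only sufficiency is asserted here, not minimality, so no second argument in the style of Theorem~\ref{theoMinimalSuffSt} is required.
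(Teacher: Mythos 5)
Your proposal follows the same essential path as the paper's proof: write down the joint density, use the convex-combination inequality of Lemma~\ref{lemma:D} to collapse the product of per-sensor indicators $\prod_{i\in\mathcal{S}}\mathbf{1}_{\mathbf{z}_i\in\mathcal{B}_R(\mathbf{z}_{\text{T}})}$ into a product over the corners of $\mathcal{C}(\mathcal{Z})$ alone, and then invoke Neyman--Fisher; the geometric core is identical. Where you genuinely differ is in the allocation of the factor $\bigl(\tfrac{\mathcal{A}(\mathbb{G})-\mathcal{A}(\mathcal{B}_{R}(\mathbf{O}))}{\mathcal{A}(\mathbb{G})}\bigr)^{N-n(\mathbf{Z})}$: the paper puts it (together with $\mathcal{A}(\mathbb{G})^{-n(\mathbf{Z})}$) into $h(\mathbf{Z})$ so that $g$ depends on the data only through $\mathcal{W}$, whereas you set $h\equiv 1$ and absorb it into $g$, at the price of carrying $n(\mathbf{Z})$ alongside $\mathcal{W}$. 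Your allocation is the more defensible one: since $R$ is an unknown parameter in this subsection, that factor depends jointly on $R$ and on the data (through $n$), so it cannot legitimately sit in a parameter-free $h(\mathbf{Z})$, and it cannot sit in a $g$ that sees only $\mathcal{W}$, because the number of detecting sensors is not recoverable from the corner set (many configurations with the same convex hull have different $n$, and $n$ is genuinely informative about $R$). What you actually prove --- and candidly say you prove --- is that the augmented statistic $\bigl(\mathcal{W},n(\mathbf{Z})\bigr)$ is sufficient for $(\mathbf{z}_{\text{T}},R)$, which is the version of Theorem~\ref{SSunknownR} that survives scrutiny; the literal claim that $\mathcal{W}$ by itself is sufficient is not delivered by either factorization. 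So your argument is sound for the statistic you name; just note that it establishes a corrected, slightly different statement rather than the theorem verbatim.
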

\begin{proof}
The probability density function that the detecting sensors will be located at $\mathbf{Z}$ is
\begin{eqnarray}
f\left(\mathbf{Z}; \mathbf{z}_{\text{T}} , R \right) &=& \left(\frac{1}{\mathcal{A}(\mathbb{G})}  \right)^{n(\mathbf{Z})}    \left(\frac{\mathcal{A}(\mathbb{G})-\mathcal{A}(\mathcal{B}_{R}(\mathbf{\mathbf{z}_{\text{T}}}))    }{\mathcal{A}(\mathbb{G})}  \right)^{N-n(\mathbf{Z})}   \prod_{i \in \mathcal{S}}  \mathbf{1}_{\mathbf{z}_i \in   \mathcal{B}_{R}(\mathbf{z}_{\text{T}}) }  \nonumber\\
&=& \left(\frac{1}{\mathcal{A}(\mathbb{G})}  \right)^{n(\mathbf{Z})}    \left(\frac{\mathcal{A}(\mathbb{G})-\mathcal{A}(\mathcal{B}_{R}(\mathbf{O}))    }{\mathcal{A}(\mathbb{G})}  \right)^{N-n(\mathbf{Z})}   \prod_{i \in \mathcal{S}}  \mathbf{1}_{\mathbf{z}_i \in   \mathcal{B}_{R}(\mathbf{z}_{\text{T}}) }  \nonumber
\label{f_ZD_gen}
\end{eqnarray}
where $n(\mathbf{Z})$ represents the number of elements of vector $\mathbf{Z}$ and $\mathbf{O}$ is the origin. Due to Lemma \ref{lemma:D}, if all elements of $\mathcal{W}$ are inside $\mathcal{B}_{R}(\mathbf{z}_{\text{T}})$, so is $\mathcal{C}(\mathcal{Z})$. Thus all detecting sensors are also inside $\mathcal{B}_{R}(\mathbf{z}_{\text{T}})$ and the value of corresponding indicator functions are 1. On the other hand, if any $\mathbf{z}_l \in \mathcal{W}$ is not located inside $\mathcal{B}_{R}(\mathbf{z}_{\text{T}})$, $f\left(\mathbf{Z}; \mathbf{z}_{\text{T}} , R \right)$ would be zero regardless of value of indicator function for other sensors. Hence, equation \eqref{f_ZD_gen} becomes

\begin{eqnarray}
f\left(\mathbf{Z}; \mathbf{z}_{\text{T}} , R \right)
&=& \left(\frac{1}{\mathcal{A}(\mathbb{G})}  \right)^{n(\mathbf{Z})}    \left(\frac{\mathcal{A}(\mathbb{G})-\mathcal{A}(\mathcal{B}_{R}(\mathbf{O}))  }{\mathcal{A}(\mathbb{G})}  \right)^{N-n(\mathbf{Z})} \prod_{z_l \in \mathcal{W}} \mathbf{1}_{\mathbf{z}_l \in   \mathcal{B}_{R}(\mathbf{z}_{\text{T}}) }
\label{PDFZ-D}
\end{eqnarray}
Fisher-Neyman factorization of (\ref{PDFZ-D}) is
\begin{eqnarray}
f(\mathbf{Z}; \mathbf{z}_{\text{T}} , R)  =h(\mathbf{Z}) g_{\mathbf{z}_{\text{T}},R} \left(\mathcal{W} \right), \nonumber
\label{f_ZD_SuffSt}
\end{eqnarray}
where
\begin{align}
\left\{ \begin{array}{lcl}
g_{\mathbf{z}_{\text{T}},R}\left(\mathcal{W} \right)  & = & \prod_{z_l \in \mathcal{W}} \mathbf{1}_{\mathbf{z}_l \in   \mathcal{B}_{R}(\mathbf{z}_{\text{T}}) }, \\
h(\mathbf{Z})  & = & \left(\frac{1}{\mathcal{A}(\mathbb{G})}  \right)^{n(\mathbf{Z})}    \left(\frac{\mathcal{A}(\mathbb{G})-\mathcal{A}(\mathcal{B}_{R}(\mathbf{O}))  }{\mathcal{A}(\mathbb{G})}  \right)^{N-n(\mathbf{Z})}  .
\end{array}\right.
\end{align}


This implies that, $\mathcal{W}$ is a sufficient statistic for estimation of $\mathbf{z}_{\text{T}}$ and $R$ due to Fisher-Neyman factorization theorem \cite{Kay93Estimationbook}.
\end{proof}

It's worth noting that for some large $R$, all corners of $\mathcal{W}$ contribute to the \textit{possible target region}. We could not come up with a formal proof for this in $l$ dimensional space but a conjecture in two dimensional space is that if it is not, the non-contributing corner would be located in line with the other two adjacent corners which contradicts that it is a corner of convex hull. Thus, with the same reasoning as proof of theorem \ref{theoMinimalSuffSt}, $\mathcal{W}$ would be a minimal sufficient statistics. Note that in this case $R,\mathbf{z}_{\text{T}}$ are both parameters of estimation.

\begin{remark}\label{rem:notcomplete}
Although $\mathcal{W}$ is a sufficient statistic for estimation of target location, it is not complete. To see this we will provide a measurable function $\mathbf{g}(.)\ne \mathbf{0}$ where $E[\mathbf{g}(\mathcal{W})]=\mathbf{0}$ for all $\mathbf{z}_{\text{T}}$. Denote the centroid (average) of $\mathcal{W}$ by $\mathbf{\bar{z}}(\mathcal{W})=\frac{1}{|\mathcal{W}|}\sum_{j \in \mathcal{W}} \mathbf{\mathbf{w}_j}$, and the center of mass of $\mathcal{C}(\mathcal{W})$ by $\mathbf{\tilde{z}}(\mathcal{W})=\frac{1}{\int_{\mathcal{C}(\mathcal{W})}~dv} \int_{\mathcal{C}(\mathcal{W})} \mathbf{z} ~dv$.  As in Remark \ref{rem:notcomplete1}, it is straight forward to show that $E[\mathbf{\bar{z}}]=E[\mathbf{\tilde{z}}]=\mathbf{z}_{\text{T}}$. On the other hand, in general $\mathbf{\bar{z}} \neq \mathbf{\tilde{z}}$ (except in a one dimensional space). Consequently, for $\mathbf{g}(\mathcal{W})=\mathbf{\bar{z}}(\mathcal{W})-\mathbf{\tilde{z}}(\mathcal{W})$ we have $E[\mathbf{g}(\mathcal{W})]=E[\mathbf{\bar{z}}-\mathbf{\tilde{z}}]=\mathbf{0}$ for all $\mathbf{z}_{\text{T}}$ but $\mathbf{g}(\mathcal{W}) \neq \mathbf{0}$.\\
The one dimensional space is an exception in this regard and has been studied in \cite{David2003OrderStatistics}. The conclusion is that when $R$ is unknown, $\mathcal{Z}_{\text{MSS}}$ is complete in this case.
\end{remark}

\begin{remark}
Recall that  in Theorem \ref{SSunknownR} we showed  that $\mathcal{W}$ is a sufficient statistic when $R$ is unknown. Hence, this set is also a sufficient statistic for the case when $R$ is known. That is, if we define
\begin{eqnarray}
\mathcal{S}_{\text{SS}} & = & \left\{ i| \mathbf{z}_i \in \mathcal{W}   \right\} \nonumber\\
\mathcal{Z}_{\text{SS}}& = &\left\{ \mathbf{z}_i | i \in  \mathcal{S}_{\text{SS}} \right\} .\nonumber
\end{eqnarray}
we have $\mathcal{Z}_{\text{MSS}}\subseteq \mathcal{Z}_{\text{SS}}$. Thus, considering that finding the convex hull of detecting sensors is a straight forward procedure, $\mathcal{S}_{\text{SS}}$ can be used as a starting point for building $\mathcal{S}_{\text{MSS}}$ for the case when $R$ is known.
\begin{eqnarray}
\mathcal{S}_{\text{MSS}} & = & \left\{ k | k \in \mathcal{S}_{\text{SS}}  ,   \mathcal{T}_k( \mathcal{Z}_{\text{SS}}) \neq \mathcal{T}(\mathcal{Z}_{\text{SS}}) \right\}.  \nonumber
\end{eqnarray}

\end{remark}

\section{On the Existence of MVU Estimator}
\label{sec:MVUexistance}
It is easy to show that the likelihood function is discontinuous, and thus, the regularity conditions required to obtain Cram\'er-Rao  bound (CRB) does not hold. On the other hand, since the minimum sufficient statistics is not complete, Lehmann-Scheff\'e theorem \cite{Kay93Estimationbook} cannot be used to find the MVU estimator. In this chapter we explore strategies to find the MVU estimator directly from its definition while CRB can not be established. The approach lead to interesting results regarding the existence of an MVU estimator in the cases when $R$ is known or when it is not known.

\subsection{Known Detection Radius}

\begin{lemma}
Assuming a target location $\mathbf{z}_{\text{T}}$, an observation $\mathbf{Z}$ is a \textit{possible observation} i.e. $f(\mathbf{Z};\mathbf{z}_{\text{T}})>0$ if and only if  $\mathbf{z}_{\text{T}} \in \mathcal{T}(\mathbf{Z}) $
\end{lemma}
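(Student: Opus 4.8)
The plan is to obtain the claim as an immediate consequence of the explicit sampling density recorded in~\eqref{fZmain}, so no new machinery is needed; the work is in checking strict positivity of each factor and in translating the geometric condition ``every detecting sensor lies within $R$ of the target'' into ``$\mathbf{z}_{\text{T}}$ lies in the intersection of the balls.'' First I would recall why~\eqref{fZmain} has the stated two-branch form: conditioned on a target at $\mathbf{z}_{\text{T}}$, the $N$ sensor positions are i.i.d.\ uniform on $\mathbb{G}$, and the observation $\mathbf{Z}$ occurs exactly when the $n=n(\mathbf{Z})$ sensors listed in $\mathbf{Z}$ all fall in $\mathcal{B}_R(\mathbf{z}_{\text{T}})$ while the remaining $N-n$ fall in $\mathbb{G}\setminus\mathcal{B}_R(\mathbf{z}_{\text{T}})$. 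Using the elementary symmetry $\mathbf{z}_i\in\mathcal{B}_R(\mathbf{z}_{\text{T}})\iff\mathbf{z}_{\text{T}}\in\mathcal{B}_R(\mathbf{z}_i)$, the requirement that all listed sensors detect is equivalent to $\mathbf{z}_{\text{T}}\in\bigcap_{i\in\mathcal{S}}\mathcal{B}_R(\mathbf{z}_i)=\mathcal{T}(\mathbf{Z})$, which is precisely the dividing line between the two branches.

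For the ``only if'' direction I would argue by contraposition: if $\mathbf{z}_{\text{T}}\notin\mathcal{T}(\mathbf{Z})$, then some listed sensor $\mathbf{z}_i$ satisfies $\|\mathbf{z}_i-\mathbf{z}_{\text{T}}\|>R$, so it cannot have detected a target located at $\mathbf{z}_{\text{T}}$; hence the event $\mathbf{Z}$ has probability zero, i.e.\ $f(\mathbf{Z};\mathbf{z}_{\text{T}})=0$. This also reads off directly from the lower branch of~\eqref{fZmain}.

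For the ``if'' direction, assume $\mathbf{z}_{\text{T}}\in\mathcal{T}(\mathbf{Z})$, so $f(\mathbf{Z};\mathbf{z}_{\text{T}})$ equals the upper branch of~\eqref{fZmain}. The factor $\big(1/\mathcal{A}(\mathbb{G})\big)^{n(\mathbf{Z})}$ is strictly positive since $\mathcal{A}(\mathbb{G})$ is finite and positive. For the factor $\big((\mathcal{A}(\mathbb{G})-\mathcal{A}(\mathcal{B}_R(\mathbf{z}_{\text{T}})))/\mathcal{A}(\mathbb{G})\big)^{N-n(\mathbf{Z})}$, note that $0\le N-n(\mathbf{Z})\le N$: if the exponent is $0$ the factor equals $1$, and otherwise it is positive precisely when $\mathcal{A}(\mathbb{G})>\mathcal{A}(\mathcal{B}_R(\mathbf{z}_{\text{T}}))$. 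The latter is where the standing assumption $\mathcal{B}_{2R}(\mathbf{z}_{\text{T}})\subset\mathbb{G}$ enters, since it yields $\mathcal{A}(\mathbb{G})\ge\mathcal{A}(\mathcal{B}_{2R}(\mathbf{z}_{\text{T}}))>\mathcal{A}(\mathcal{B}_R(\mathbf{z}_{\text{T}}))$. Thus every factor is strictly positive and $f(\mathbf{Z};\mathbf{z}_{\text{T}})>0$.

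The only point that requires genuine (if minor) care is this last positivity of the ``no-detection'' factor: without the deployment region being strictly larger than the detection ball, the degenerate case $\mathbb{G}=\mathcal{B}_R(\mathbf{z}_{\text{T}})$ would force that factor to vanish whenever $n(\mathbf{Z})<N$. I expect no other obstacle; the remainder is bookkeeping around the definitions of $\mathcal{T}(\mathbf{Z})$ and of $\mathcal{R}_\Theta(\mathfrak{Z})$, and this lemma then serves as the support description used in the subsequent analysis of the MVU estimator.
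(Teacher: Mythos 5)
Your proof is correct and follows essentially the same route as the paper's, which likewise reduces both directions to the equivalence $\mathbf{z}_i\in\mathcal{B}_R(\mathbf{z}_{\text{T}})\iff\mathbf{z}_{\text{T}}\in\mathcal{B}_R(\mathbf{z}_i)$ and the support structure of the density in~\eqref{fZmain}. You are in fact somewhat more careful than the paper, which does not explicitly verify strict positivity of the no-detection factor via the standing assumption $\mathcal{B}_{2R}(\mathbf{z}_{\text{T}})\subset\mathbb{G}$; that check is a worthwhile addition but not a departure in method.
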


\begin{proof} The forward direction has already been shown during definition of $\mathcal{T}$. The reverse is true because
\begin{eqnarray}
&& \mathbf{z}_{\text{T}} \in \mathcal{T}(\mathbf{Z}) \Rightarrow \mathbf{z}_{\text{T}} \in \bigcap_{i \in \mathcal{S}} \mathcal{B}_{R,\mathbf{z_i}} \nonumber\\
&& \Rightarrow  \forall i \in \mathcal{S}, \mbox{ } \| \mathbf{z}_{\text{T}}- \mathbf{z_i} \| \leq R  \nonumber
\end{eqnarray}

Thus that observation is a \textit{possible observation}.
\end{proof}

Let us define \textit{near points}, $\mathcal{N}(\mathcal{T}(\mathbf{Z}_{\text{mss}}))$ as all the points whose maximum distance from $\mathcal{T}(\mathbf{Z}_{\text{mss}})$ are less than $R$, i.e.,\\
\begin{eqnarray}
\mathcal{N}(\mathcal{T}(\mathbf{Z}_{\text{mss}}))=\{\mathbf{z} | \mathbf{z} \in \mathbb{G} \& \max\limits_{\mathbf{z}_{H} \in \mathcal{T}(\mathbf{Z}_{\text{mss}})} \| \mathbf{z}-\mathbf{z}_{H} \| \leq R \} \nonumber
\end{eqnarray}
in other words if a detecting sensor located inside $\mathcal{N}(\mathcal{T}(\mathbf{Z}_{\text{mss}}))$, it does not contribute to $\mathcal{T}$ and is not a member of minimal sufficient statistics.

Figure \ref{IlustrativePCR_N_F}(a) shows an example of such region. The detecting sensors are demonstrated by small empty circles. The target is shown by an empty triangle. The white area is $\mathcal{T}(\mathbf{Z}_{\text{mss}})$ region, dark gray area illustrates the boundaries of $\mathcal{N}(\mathcal{T}(\mathbf{Z}_{\text{mss}}))$. The black circle identifies the detection disk. \\
Note that if $\mathcal{T}(\mathbf{Z}_{\text{mss}}) \neq \emptyset$ , $\mathcal{N}(\mathcal{T}(\mathbf{Z}_{\text{mss}}))$ is not empty because at least $\mathbf{Z}_{\text{mss}} \subset \mathcal{N}(\mathcal{T}(\mathbf{Z}_{\text{mss}}))$.

\begin{lemma}
\label{NinB_R}
$\mathcal{N}(\mathcal{T}(\mathbf{Z}_{\text{mss}})) ~ \subset~ \mathcal{B}_R(\mathbf{z}_{\text{T}})$.
\begin{proof}
We know that $\mathbf{z}_{\text{T}} \in \mathcal{T}(\mathbf{Z}_{\text{mss}})$. Thus, according to definition of $\mathcal{N}$ we will have,
\begin{eqnarray}
\forall ~ \mathbf{z}_{P} \in \mathcal{N}(\mathcal{T}(\mathbf{Z}_{\text{mss}})), ~ \| \mathbf{z}_{P}- \mathbf{z}_{\text{T}} \| \leq R  ~\Rightarrow  \mathbf{z}_{P} \in \mathcal{B}_{R}(\mathbf{z}_{\text{T}})  \nonumber
\end{eqnarray}
thus, the proof is complete.
\end{proof}
\end{lemma}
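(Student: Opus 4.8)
The plan is to unwind the definition of $\mathcal{N}(\mathcal{T}(\mathbf{Z}_{\text{mss}}))$ and test it against the one distinguished point of $\mathcal{T}(\mathbf{Z}_{\text{mss}})$ that we already control, namely the true target $\mathbf{z}_{\text{T}}$. First I would record that $\mathbf{z}_{\text{T}} \in \mathcal{T}(\mathbf{Z}_{\text{mss}})$. This follows because $\mathbf{z}_{\text{T}} \in \mathcal{T}(\mathbf{Z})$ (noted right after the definition of $\mathcal{T}$), and because deleting from $\mathcal{S}$ exactly those sensors whose removal leaves $\mathcal{T}$ unchanged does not alter the possible target region, so $\mathcal{T}(\mathbf{Z}_{\text{mss}}) = \mathcal{T}(\mathbf{Z})$. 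This is the only bookkeeping step, and it is immediate from the construction of $\mathcal{S}_{\text{MSS}}$.

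Next I would take an arbitrary $\mathbf{z}_{P} \in \mathcal{N}(\mathcal{T}(\mathbf{Z}_{\text{mss}}))$. By the defining property of $\mathcal{N}$ we have $\max_{\mathbf{z}_{H} \in \mathcal{T}(\mathbf{Z}_{\text{mss}})} \| \mathbf{z}_{P} - \mathbf{z}_{H} \| \le R$. Since $\mathbf{z}_{\text{T}}$ is an admissible choice of $\mathbf{z}_{H}$ in this maximum, specializing gives $\| \mathbf{z}_{P} - \mathbf{z}_{\text{T}} \| \le \max_{\mathbf{z}_{H} \in \mathcal{T}(\mathbf{Z}_{\text{mss}})} \| \mathbf{z}_{P} - \mathbf{z}_{H} \| \le R$, i.e. $\mathbf{z}_{P} \in \mathcal{B}_{R}(\mathbf{z}_{\text{T}})$. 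As $\mathbf{z}_{P}$ was arbitrary, $\mathcal{N}(\mathcal{T}(\mathbf{Z}_{\text{mss}})) \subseteq \mathcal{B}_{R}(\mathbf{z}_{\text{T}})$, which is the claim.

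There is essentially no obstacle: the lemma is exactly the observation that the "for all points of $\mathcal{T}$" condition in the definition of $\mathcal{N}$ can be applied to the single point $\mathbf{z}_{\text{T}}$. The only subtlety worth a remark is the mismatch between the strict symbol $\subset$ in the statement and the non-strict bound $\le R$ in the definition of $\mathcal{N}$: with the $\le R$ convention one only gets $\subseteq$ in general (indeed, in the degenerate case where $\mathcal{T}(\mathbf{Z}_{\text{mss}})$ collapses to the single point $\mathbf{z}_{\text{T}}$ one has $\mathcal{N}(\mathcal{T}(\mathbf{Z}_{\text{mss}})) = \mathcal{B}_{R}(\mathbf{z}_{\text{T}})$), while in the generic case the inclusion is proper. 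I would therefore present the $\subseteq$ argument above and add one sentence noting that it is strict whenever $\mathcal{T}(\mathbf{Z}_{\text{mss}})$ is not a singleton.
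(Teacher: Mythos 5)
Your proof is correct and follows essentially the same route as the paper's: both use $\mathbf{z}_{\text{T}} \in \mathcal{T}(\mathbf{Z}_{\text{mss}})$ as the test point in the max defining $\mathcal{N}$, yielding $\| \mathbf{z}_{P}-\mathbf{z}_{\text{T}}\| \leq R$ for every $\mathbf{z}_{P}$ in $\mathcal{N}(\mathcal{T}(\mathbf{Z}_{\text{mss}}))$. Your added remark that the inclusion is really $\subseteq$ (strict only when $\mathcal{T}(\mathbf{Z}_{\text{mss}})$ is not a singleton) is a fair observation about the paper's notation, not a substantive difference.
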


\begin{figure*}[!t]
\begin{tabular}{cc}
\includegraphics[width=3.2 in]{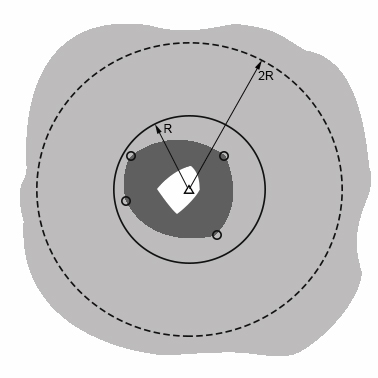}
  &
  \includegraphics[width=3.2 in]{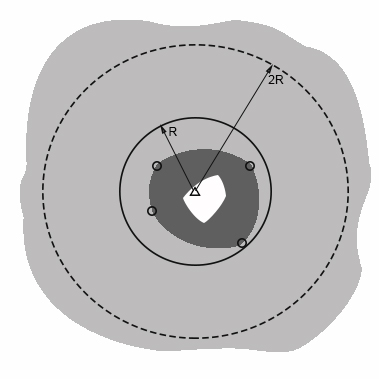} \\
  (a)&(b)\\
\end{tabular}
\caption{$\mathcal{T}$ and $\mathcal{N}(\mathcal{T})$ shifts along with $\mathbf{Z}_{\text{mss}}$}
\label{IlustrativePCR_N_F}
\end{figure*}

\begin{figure}[!t]
\begin{centering}
\includegraphics[width=3.2 in]{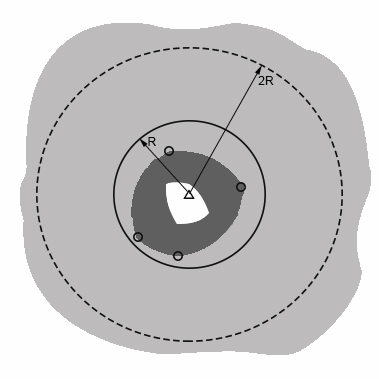}
\caption{$\mathcal{T}$ and $\mathcal{N}(\mathcal{T})$ rotates along with $\mathbf{Z}_{\text{mss}}$}
\label{RotatePCR}
\end{centering}
\end{figure}

\begin{theo}
\label{centerofmass}
Center of mass of $\mathcal{T}(\mathbf{Z})$ is the minimum variance estimator conditioned that at least one sensor is detecting and the target is located well inside the sensor deployment region such that $\mathcal{B}_{2R}(\mathbf{z}_{\text{T}}) \subset \mathbb{G}$.
\end{theo}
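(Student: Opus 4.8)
The plan is to identify the sampling law, obtain unbiasedness of $g_1(\mathbf{Z}):=CM(\mathcal{T}(\mathbf{Z}))$ from symmetry, and then compare its covariance with that of an arbitrary unbiased estimator by Rao--Blackwellizing onto the sufficient statistic $\mathcal{T}(\mathbf{Z})$ and performing a bias--variance decomposition that isolates a single cross term.

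First, conditioning on the stated event $n\ge 1$ (and using $\mathcal{B}_{2R}(\mathbf{z}_{\text{T}})\subset\mathbb{G}$ to rule out boundary effects), the reported locations $\mathbf{z}_1,\dots,\mathbf{z}_n$ are i.i.d.\ uniform on $\mathcal{B}_R(\mathbf{z}_{\text{T}})$, and $n$ itself is $\mathrm{Binomial}(N,\mathcal{A}(\mathcal{B}_R)/\mathcal{A}(\mathbb{G}))$ conditioned to be positive --- a law free of $\mathbf{z}_{\text{T}}$. Equivalently, the centered vector $(\mathbf{z}_i-\mathbf{z}_{\text{T}})_{i}$ has a distribution that does not depend on $\mathbf{z}_{\text{T}}$ and is invariant under every orthogonal transformation. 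The two geometric facts I will lean on are that $\mathbf{Z}\mapsto\mathcal{T}(\mathbf{Z})$, and likewise $\mathbf{Z}\mapsto\mathcal{N}(\mathcal{T}(\mathbf{Z}))$ and $\mathbf{Z}\mapsto CM(\mathcal{T}(\mathbf{Z}))$, are equivariant under rigid motions, $\mathcal{T}(Q\mathbf{Z}+\mathbf{v})=Q\,\mathcal{T}(\mathbf{Z})+\mathbf{v}$ --- exactly what Figs.~\ref{IlustrativePCR_N_F} and \ref{RotatePCR} illustrate --- and that $\mathcal{T}(\mathbf{Z})$ is sufficient (Theorem~\ref{theoMinimalSuffSt}).

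Unbiasedness then comes for free: by equivariance $g_1(\mathbf{Z})-\mathbf{z}_{\text{T}}=CM\big(\mathcal{T}\big((\mathbf{z}_i-\mathbf{z}_{\text{T}})_i\big)\big)$, so the error is a functional of the centered observation vector alone; its law is $\mathbf{z}_{\text{T}}$-free and isotropic, hence has mean $\mathbf{0}$, so $E[g_1(\mathbf{Z})]=\mathbf{z}_{\text{T}}$ and $\mathrm{Cov}(g_1)$ is a constant multiple of the identity, independent of $\mathbf{z}_{\text{T}}$. For minimum variance, take any unbiased $g(\mathbf{Z})$. Since $\mathcal{T}(\mathbf{Z})$ is sufficient, $\tilde g:=E[g\mid\mathcal{T}(\mathbf{Z})]$ is again unbiased with $\mathrm{Cov}(\tilde g)\preceq\mathrm{Cov}(g)$ (Rao--Blackwell), so I may assume $g$ depends on $\mathbf{Z}$ only through $\mathcal{T}(\mathbf{Z})$; here the description of the conditional law --- the minimal-sufficient sensors pin down $\mathcal{T}$, the rest are uniform on $\mathcal{N}(\mathcal{T})\subseteq\mathcal{B}_R(\mathbf{z}_{\text{T}})$ (Lemma~\ref{NinB_R}) --- is what makes $\tilde g$ computable without knowing $\mathbf{z}_{\text{T}}$. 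Writing $g=g_1+d$ with $d:=g-g_1$ a function of $\mathcal{T}$ obeying $E_{\mathbf{z}_{\text{T}}}[d]=\mathbf{0}$ for all $\mathbf{z}_{\text{T}}$,
\[
\mathrm{Cov}\big(g\big)=\mathrm{Cov}\big(g_1\big)+E_{\mathbf{z}_{\text{T}}}\!\big[(g_1-\mathbf{z}_{\text{T}})d^{T}\big]+E_{\mathbf{z}_{\text{T}}}\!\big[d\,(g_1-\mathbf{z}_{\text{T}})^{T}\big]+E_{\mathbf{z}_{\text{T}}}\!\big[dd^{T}\big],
\]
so the theorem reduces to the orthogonality $E_{\mathbf{z}_{\text{T}}}[(g_1(\mathbf{Z})-\mathbf{z}_{\text{T}})\,d(\mathbf{Z})^{T}]=\mathbf{0}$ for every such $d$, i.e.\ that $g_1-\mathbf{z}_{\text{T}}$ is uncorrelated with every unbiased estimator of $\mathbf{0}$ that is a function of $\mathcal{T}$.

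Proving this orthogonality is the crux. The Neyman--Fisher form $f(\mathbf{Z};\mathbf{z}_{\text{T}})=h(\mathbf{Z})\,\mathbf{1}_{\mathbf{z}_{\text{T}}\in\mathcal{T}(\mathbf{Z})}$ shows the likelihood is flat in $\mathbf{z}_{\text{T}}$ over $\mathcal{T}(\mathbf{Z})$, so by the very definition of center of mass $\int_{\mathcal{T}(\mathbf{Z})}\big(g_1(\mathbf{Z})-\mathbf{z}_{\text{T}}\big)\,d\mathbf{z}_{\text{T}}=\mathbf{0}$ for every $\mathbf{Z}$; writing the cross term as an integral over $\mathbf{Z}$ and interchanging integrals, this already gives $\int_\Theta E_{\mathbf{z}_{\text{T}}}[(g_1-\mathbf{z}_{\text{T}})d^{T}]\,d\mathbf{z}_{\text{T}}=\mathbf{0}$. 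The hard part --- and what I expect to be the main obstacle --- is upgrading ``zero on average over $\mathbf{z}_{\text{T}}$'' to ``zero at every $\mathbf{z}_{\text{T}}$''. I would do this using the equivariance above: conditioning on the rigid-motion congruence class of the pair $(\mathbf{z}_{\text{T}},\mathcal{T}(\mathbf{Z}))$ (equivalently, on the shape of $\mathcal{T}$ and the relative position of the target inside it), the vector $g_1(\mathbf{Z})-\mathbf{z}_{\text{T}}$ becomes a Haar-uniform rotation of a fixed vector while the flat-likelihood/center-of-mass identity controls the conditional mean of $d$, and assembling these forces the conditional --- hence the unconditional --- cross term to vanish at each fixed $\mathbf{z}_{\text{T}}$. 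In the one-dimensional case the same conclusion is available from the explicit order-statistics computation of \cite{David2003OrderStatistics}, since there $\mathcal{T}(\mathbf{Z})=[\max_i \mathbf{z}_i-R,\ \min_i \mathbf{z}_i+R]$ and $g_1$ is the midrange. Once the cross term is dispatched, $\mathrm{Cov}(g)\succeq\mathrm{Cov}(g_1)$ for every unbiased $g$, which is the assertion.
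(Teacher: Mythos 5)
Your overall architecture --- Rao--Blackwellize onto the sufficient statistic $\mathcal{T}(\mathbf{Z})$, write any competitor as $g=g_1+d$ with $d$ an unbiased estimator of $\mathbf{0}$, and reduce the theorem to the pointwise orthogonality $E_{\mathbf{z}_{\text{T}}}\bigl[(g_1-\mathbf{z}_{\text{T}})\,d^{T}\bigr]=\mathbf{0}$ --- is the classical characterization of an MVU estimator and is organized quite differently from the paper's proof, which instead expands the MSE as an explicit nested integral and minimizes it shape by shape. Your unbiasedness and isotropy arguments are sound, and you correctly locate the crux. But the crux is exactly where the proposal breaks: the upgrade from the averaged identity $\int_{\Theta}E_{\mathbf{z}_{\text{T}}}[(g_1-\mathbf{z}_{\text{T}})d^{T}]\,d\mathbf{z}_{\text{T}}=\mathbf{0}$ to the identity at each fixed $\mathbf{z}_{\text{T}}$ is only sketched, and the sketch does not go through for arbitrary $d$. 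To see why, condition not on the rigid-motion class of the pair $(\mathbf{z}_{\text{T}},\mathcal{T})$ but on the translation class $\mathbf{T}_k=(\mathbf{0},\mathbf{z}_2-\mathbf{z}_1,\dots,\mathbf{z}_k-\mathbf{z}_1)$ of the configuration: for fixed $\mathbf{z}_{\text{T}}$ the residual randomness is $\mathbf{t}'=\mathbf{z}_{\text{T}}-\mathbf{z}_1$, which is uniform on $\mathcal{T}_0:=\mathcal{T}(\mathbf{T}_k)$ (this is precisely the content of the paper's change of variables), and $g_1(\mathbf{Z})-\mathbf{z}_{\text{T}}=CM(\mathcal{T}_0)-\mathbf{t}'$. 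The conditional cross term is then $\frac{1}{\mathcal{A}(\mathcal{T}_0)}\int_{\mathcal{T}_0}\bigl(CM(\mathcal{T}_0)-\mathbf{t}'\bigr)\,d\bigl(\mathcal{T}_0+\mathbf{z}_{\text{T}}-\mathbf{t}'\bigr)^{T}\,d\mathbf{t}'$. The center-of-mass identity annihilates this only when $d$ is constant over the translation class, i.e.\ when $d$ is translation invariant; for a general function $d$ of $\mathcal{T}$ the factor $d(\mathcal{T}_0+\mathbf{z}_{\text{T}}-\mathbf{t}')$ co-varies with $\mathbf{t}'$ and nothing forces the integral to vanish. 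Your Haar-rotation conditioning does not rescue this: fixing the relative position of the target inside $\mathcal{T}$ freezes exactly the degree of freedom over which the center-of-mass identity averages, leaving only an orientation average that says nothing about $d$. Since $\mathcal{Z}_{\text{MSS}}$ is not complete (Remark \ref{rem:notcomplete1}), non-translation-invariant unbiased estimators of $\mathbf{0}$ are not obviously excluded, so this is a real obstruction, not a formality.

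The paper closes the same hole by explicitly restricting the competition: it assumes the candidate estimator is translation equivariant, $g(\mathbf{z}_1\mathbf{1}_k+\mathbf{T}_k)=g(\mathbf{T}_k)+\mathbf{z}_1$, which in your decomposition is exactly the statement that $d=g-g_1$ is translation invariant; under that restriction your orthogonality argument closes immediately by the computation above, and the two proofs become the same argument in different bookkeeping. So the fix is either to add the equivariance hypothesis (matching what the paper actually proves) or to supply a genuinely new argument handling non-invariant $d$ --- the latter is not present in your proposal and is not present in the paper either.
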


\begin{proof} Assume that $\mathcal{Z}_{\text{mss}}$ is the minimal sufficient statistics of $\mathbf{Z}$ and $\mathbf{Z}_{\text{mss}}$ is the vector format of that.  Then according to lemma \ref{onetoone}, $\mathcal{T}(\mathbf{Z})=\mathcal{T}(\mathbf{Z}_{\text{MSS}})$. \\
Let's assume that an MVU estimator exist for estimation of $\mathbf{z}_{\text{T}}$ and is represented by $\theta(\mathbf{Z})$. Since $\mathcal{Z}_{\text{mss}}$ is the minimal sufficient statistics, $h(\mathcal{Z}_{\text{mss}})=E[\theta(\mathbf{Z})\mid \mathcal{Z}_{\text{mss}}]$ should also be an MVU estimator according to Rao-Blackwell theorem \cite{Kay93Estimationbook}. Moreover, because $\mathcal{Z}_{\text{mss}}$ can be considered as a function of $\mathbf{Z}_{\text{mss}}$, we can restrict our search for MVU estimator to functions of $\mathbf{Z}_{\text{mss}}$.
In other words, if MVU estimator exists we can express it as a function of $\mathbf{Z}_{\text{mss}}$ i.e.
\begin{eqnarray}
\hat{\mathbf{z}_{\text{T}}}=g\left(\mathbf{Z}_{\text{mss}}\right)
\end{eqnarray}
On the other hand, the conditions $\mathcal{B}_{2R}(\mathbf{z}_{\text{T}}) \subset \mathbb{G}$ and at least one sensor would be detecting guarantees that for any possible $\mathcal{T}=\mathcal{T}(\mathbf{Z}_{\text{mss}})$, if $\mathbf{Z}_{\text{mss}}$ elements shift by some vector $\mathbf{d}$, $|\mathbf{d}|<R$ then $\mathcal{N}(\mathcal{T}(\mathbf{Z}_{\text{mss}}))$ will not reach to the borders of $\mathbb{G}$ and its area remains fix. We will calculate the variance of location estimation by conditioning over the number of detecting sensors, $n$,  and over the number of sensors forming the minimal sufficient statistics denoted by $k$. That's because the incident of each ($n$,$k$) partitions the random space to disjoint subspace.\\
Let us represents  $(\mathbf{z}_{1},..,\mathbf{z}_{k})$ by $\mathbf{Z_{k}}$. Considering that below expectation is symmetrical regardless of the order of the sensors forming the minimal sufficient statistics, we assume that the sensors belong to minimum sufficient statistics have indexes $1,..,k$ and generalize the result by including a binomial coefficient. The assumption that the minimal sufficient statistics are located at index $1,..,k$ dictates that other detecting sensors should be located in $\mathcal{N}(\mathcal{T}(\mathbf{Z_{k}}))$. Therefore, the expectation of square error of the estimation conditioned that at least one sensor is detecting can be calculated as following:
\begin{eqnarray}
E[\left|\hat{\mathbf{z}_{\text{T}}}-\mathbf{z}_{\text{T}}\right|^2]
&=& \sum_{n=1}^{N} \left(\frac{\mathcal{A}(\mathbb{G})-\mathcal{A}(\mathcal{B}_{R}(\mathbf{O}))}{\mathcal{A}(\mathbb{G})} \right)^{N-n} \binom{N}{n} \sum_{k=1}^{n} \binom{n}{k}   \nonumber\\
& &  \underbrace{\int_{\mathcal{B}_R(\mathbf{z}_{\text{T}})} .. \int_{\mathcal{B}_R(\mathbf{z}_{\text{T}})}}_{k}  \underbrace{\int_{\mathcal{N}(\mathcal{T}(\mathbf{Z_{k}}))} .. \int_{\mathcal{N}(\mathcal{T}(\mathbf{Z_{k}}))}}_{n-k}   \nonumber\\
& & \left(\frac{1}{\mathcal{A}(\mathbb{G})}\right)^{n}
\left| g\left(\mathbf{Z_{k}}\right) -\mathbf{z}_{\text{T}}\right|^2 \mathbf{1}_{\mathbf{Z_{k}}=MSS(\mathbf{Z_{k}})}  \nonumber\\
& & d \mathbf{z}_{n}..d \mathbf{z}_{k+1} d \mathbf{z}_{k}..d \mathbf{z}_{1}     \nonumber
\end{eqnarray}

where $N$ represents the total number of sensors, $\int_{\mathbb{Q} } d \mathbf{z}=\underbrace{\int..\int_{\mathbb{Q}}}_{l} d x_1 ..d x_l$ is the integral over the multi dimensional volume, $\mathbb{Q}$, and $\left(\frac{\mathcal{A}(\mathbb{G})-\mathcal{A}(\mathcal{B}_{R}(\mathbf{O}))}{\mathcal{A}(\mathbb{G})} \right)^{N-n}$ is the probability that $N-n$ sensors located outside the detecting disk. The indicator function $\mathbf{1}_{\mathbf{Z_{k}}=MSS(\mathbf{Z_{k}})}$ make the integrand zero whenever $\mathbf{Z_{k}}$ is not the minimal sufficient statistics.
\begin{eqnarray}
E[\left|\hat{\mathbf{z}_{\text{T}}}-\mathbf{z}_{\text{T}}\right|^2]
&=& \sum_{n=1}^{N} \binom{N}{n}  \sum_{k=1}^{n} \binom{n}{k}  \frac{\left( \mathcal{A}(\mathbb{G})-\mathcal{A}(\mathcal{B}_{R}(\mathbf{O}))\right)^{N-n}}{\mathcal{A}(\mathbb{G})^N}     \nonumber\\
& & \underbrace{\int_{\mathcal{B}_R(\mathbf{z}_{\text{T}})} .. \int_{\mathcal{B}_R(\mathbf{z}_{\text{T}})}}_{k}  \underbrace{\int_{\mathcal{N}(\mathcal{T}(\mathbf{Z_{k}}))} .. \int_{\mathcal{N}(\mathcal{T}(\mathbf{Z_{k}}))}}_{n-k} d \mathbf{z}_{n}..d \mathbf{z}_{k+1} \nonumber\\
& & \mathbf{1}_{\mathbf{Z_{k}}=MSS(\mathbf{Z_{k}})}  \left| g\left(\mathbf{Z_{k}}\right) -\mathbf{z}_{\text{T}}\right|^2  d \mathbf{z}_{k}..d \mathbf{z}_{1}     \nonumber\\
&=& \sum_{n=1}^{N}  \binom{N}{n} \sum_{k=1}^{n} \binom{n}{k} \frac{\left( \mathcal{A}(\mathbb{G})-\mathcal{A}(\mathcal{B}_{R}(\mathbf{O}))\right)^{N-n}}{\mathcal{A}(\mathbb{G})^N}   \underbrace{\int_{\mathcal{B}_R(\mathbf{z}_{\text{T}})} .. \int_{\mathcal{B}_R(\mathbf{z}_{\text{T}})}}_{k}   \nonumber\\
& & \left(\mathcal{A}\left(\mathcal{N}(\mathcal{T}(\mathbf{Z_{k}}))\right)\right)^{n-k}
\left| g\left(\mathbf{Z_{k}}\right) -\mathbf{z}_{\text{T}}\right|^2 \mathbf{1}_{\mathbf{Z_{k}}=MSS(\mathbf{Z_{k}})}  \nonumber\\
& &  d \mathbf{z}_{k}..d \mathbf{z}_{1}     \nonumber
\end{eqnarray}
Now, if we change the variables according to the following rules
\begin{eqnarray}
\mathbf{t}_1&=&0  \nonumber\\
\mathbf{t}_2&=&\mathbf{z}_2-\mathbf{z}_1  \nonumber\\
...\nonumber\\
\mathbf{t}_k&=&\mathbf{z}_k-\mathbf{z}_1  \nonumber
\end{eqnarray}
and assuming $\mathbf{T_{k}}=(\mathbf{0},\mathbf{t}_2..,\mathbf{t}_k)$ and $\mathbf{z}_1 \mathbf{1_k}=\underbrace{(\mathbf{z}_1,..,\mathbf{z}_1)}_{k}$, then $(\mathbf{z}_1,..,\mathbf{z}_k)=\mathbf{z}_1 \mathbf{1_k}+\mathbf{T_k}$. Moreover, it is reasonable to assume that the MVU estimator, $g$, shifts in space whenever the the input data shifts i.e. $g\left(\mathbf{z}_1\mathbf{1_k}+\mathbf{T_k}\right)=g\left(\mathbf{T_{k}}\right)+\mathbf{z}_1$
cause otherwise the $g$ would be dependent on the selection of the origin.
\begin{eqnarray}
E[\left|\hat{\mathbf{z}_{\text{T}}}-\mathbf{z}_{\text{T}}\right|^2]
&=& \sum_{n=1}^{N} \binom{N}{n}  \sum_{k=1}^{n} \binom{n}{k}   \frac{\left( \mathcal{A}(\mathbb{G})-\mathcal{A}(\mathcal{B}_{R}(\mathbf{O}))\right)^{N-n}}{\mathcal{A}(\mathbb{G})^N}   \nonumber\\
& & \int_{\mathcal{B}_R(\mathbf{z}_{\text{T}})} \underbrace{\int_{\mathbb{S}(\mathbf{z}_1)} .. \int_{\mathbb{S}(\mathbf{z}_1)}}_{k-1}
 \left(\mathcal{A}\left(\mathcal{N}(\mathcal{T}(\mathbf{z}_1 \mathbf{1_{k}}+\mathbf{T_{k}}))\right)\right)^{n-k}
\mathbf{1}_{\mathbf{z}_{\text{T}} \in \mathcal{T}(\mathbf{z}_1 \mathbf{1_{k}}+\mathbf{T_{k}})}    \times \nonumber\\
& &   \left| g\left(\mathbf{T_{k}}\right) +\mathbf{z}_1-\mathbf{z}_{\text{T}}\right|^2 \mathbf{1}_{(\mathbf{z}_1 \mathbf{1_{k}}+\mathbf{T_{k}})=MSS(\mathbf{z}_1 \mathbf{1_{k}}+\mathbf{T_{k}})}   d \mathbf{t}_{k}..d \mathbf{t}_{2} d \mathbf{z}_{1}    \nonumber
\end{eqnarray}
where the interval of integrals for elements of $\mathbf{T_{k}}$ are the shifted versions of $\mathcal{B}_{R}(\mathbf{z}_{\text{T}})$ by $\mathbf{z}_1$  i.e.  $\mathbb{S}(\mathbf{z}_1)=\mathcal{B}_{R}(\mathbf{z}_{\text{T}})-\mathbf{z}_1$. Moreover, the indicator function, $\mathbf{1}_{\mathbf{z}_{\text{T}} \in \mathcal{T}(\mathbf{z}_1 \mathbf{1_{k}}+\mathbf{T_{k}})}$,  make the integrand zero whenever $\mathbf{T_{k}}$ is an impossible (invalid) combination. \\
The next equation uses the fact that when
$\mathcal{B}_{2R}(\mathbf{z}_{\text{T}}) \subset \mathbb{G}$, shifting $\mathbf{T_{k}}$ by $\mathbf{z}_1$ will not change the area of $\mathcal{A}(\mathcal{N}(\mathcal{T}))$ i.e.
$\mathcal{A}(\mathcal{N}(\mathcal{T}(\mathbf{T_{k}}+\mathbf{z}_1 \mathbf{1_{k}})))=\mathcal{A}(\mathcal{N}(\mathcal{T}(\mathbf{T_{k}})))$ as illustrated in figures \ref{IlustrativePCR_N_F}(a) and \ref{IlustrativePCR_N_F}(b). That's because according to lemma (\ref{NinB_R}), $\mathcal{N}(\mathcal{T}(\mathbf{T_{k}}+\mathbf{z}_1 \mathbf{1_{k}})) \subset \mathcal{B}_{R}(\mathbf{z}_{\text{T}}) $. It is also worth noting that the result will not change if we substitute the inner integral intervals, $\mathbb{S}(\mathbf{z}_1)$, with $\mathcal{B}_{2R}(O)= \mathcal{B}_{R}(O) \oplus \mathcal{B}_{R}(O)$ (where $\oplus$ represents the Minkowski sum \cite{oks2006minkowski}) because $\mathbb{S}(\mathbf{z}_1) \subset \mathcal{B}_{2R}(O)$ and the indicator functions, $\mathbf{1}$, guarantees to make the integrands zero whenever $\mathbf{t_i} \notin \mathbb{S}(\mathbf{z}_1) \& \mathbf{t_i} \in \mathcal{B}_{2R}(O)$ for any $i \in \{2,..,k\}$. Thus, we will have\\
\begin{eqnarray}
E[\left|\hat{\mathbf{z}_{\text{T}}}-\mathbf{z}_{\text{T}}\right|^2]
&=& \sum_{n=1}^{N}  \binom{N}{n}  \sum_{k=1}^{n} \binom{n}{k}   \frac{\left( \mathcal{A}(\mathbb{G})-\mathcal{A}(\mathcal{B}_{R}(\mathbf{O}))\right)^{N-n}}{\mathcal{A}(\mathbb{G})^N}   \nonumber\\
& & \int_{\mathcal{B}_R(\mathbf{z}_{\text{T}})} \underbrace{\int_{\mathcal{B}_{2R}(\mathbf{O})} .. \int_{\mathcal{B}_{2R}(\mathbf{O})}}_{k-1}
\left(\mathcal{A}\left(\mathcal{N}(\mathcal{T}(\mathbf{T_{k}}))\right)\right)^{n-k}
\mathbf{1}_{\mathbf{z}_{\text{T}} \in \mathcal{T}(\mathbf{z}_1 \mathbf{1_{k}}+\mathbf{T_{k}})}    \times \nonumber\\
& &
 \left| g\left(\mathbf{T_{k}}\right) +\mathbf{z}_1-\mathbf{z}_{\text{T}}\right|^2  \mathbf{1}_{(\mathbf{z}_1 \mathbf{1_{k}}+\mathbf{T_{k}})=MSS(\mathbf{z}_1 \mathbf{1_{k}}+\mathbf{T_{k}})}    d \mathbf{t}_{k}..d \mathbf{t}_{2} d \mathbf{z}_{1}    \nonumber
\end{eqnarray}
On the other hand, it is obvious from the mechanism of building $\mathcal{T}$ that $\mathcal{T}(\mathbf{z}_1 \mathbf{1_{k}}+\mathbf{T_{k}})$  would be the shifted version of $\mathcal{T}(\mathbf{T_{k}})$ by $\mathbf{z}_1$, and similarly $MSS(\mathbf{z}_1 \mathbf{1_{k}}+\mathbf{T_{k}})$ would be the shifted version of $MSS(\mathbf{T_{k}})$  by $\mathbf{z}_1$. Thus,
\begin{eqnarray}
E[\left|\hat{\mathbf{z}_{\text{T}}}-\mathbf{z}_{\text{T}}\right|^2]
&=& \sum_{n=1}^{N}  \binom{N}{n}  \sum_{k=1}^{n} \binom{n}{k}   \frac{\left( \mathcal{A}(\mathbb{G})-\mathcal{A}(\mathcal{B}_{R}(\mathbf{O}))\right)^{N-n}}{\mathcal{A}(\mathbb{G})^N}   \nonumber\\
& & \int_{\mathcal{B}_R(\mathbf{z}_{\text{T}})} \underbrace{\int_{\mathcal{B}_{2R}(\mathbf{O})} .. \int_{\mathcal{B}_{2R}(\mathbf{O})}}_{k-1}
\left(\mathcal{A}\left(\mathcal{N}(\mathcal{T}(\mathbf{T_{k}}))\right)\right)^{n-k} \mathbf{1}_{\mathbf{z}_{\text{T}}- \mathbf{z}_1 \in \mathcal{T}(\mathbf{T_{k}})}  \times \nonumber\\
& &  \left| g\left(\mathbf{T_{k}}\right) +\mathbf{z}_1-\mathbf{z}_{\text{T}}\right|^2    \mathbf{1}_{\mathbf{T_{k}}=MSS(\mathbf{T_{k}})}  d \mathbf{t}_{k}..d \mathbf{t}_{2} d \mathbf{z}_{1}      \nonumber
\end{eqnarray}
Now with a change of variable $\mathbf{t'}=\mathbf{z}_{\text{T}}-\mathbf{z}_1$, we have
\begin{eqnarray}
E[\left|\hat{\mathbf{z}_{\text{T}}}-\mathbf{z}_{\text{T}}\right|^2]
&=& \sum_{n=1}^{N}  \binom{N}{n} \sum_{k=1}^{n} \binom{n}{k}  \frac{\left( \mathcal{A}(\mathbb{G})-\mathcal{A}(\mathcal{B}_{R}(\mathbf{O}))\right)^{N-n}}{\mathcal{A}(\mathbb{G})^N}    \nonumber\\
& & \int_{\mathbf{z}_{\text{T}}-\mathcal{B}_R(\mathbf{z}_{\text{T}})} \underbrace{\int_{\mathcal{B}_{2R}(\mathbf{O})} .. \int_{\mathcal{B}_{2R}(\mathbf{O})}}_{k-1}
\left(\mathcal{A}\left(\mathcal{N}(\mathcal{T}(\mathbf{T_{k}}))\right)\right)^{n-k} \mathbf{1}_{\mathbf{T_{k}}=MSS(\mathbf{T_{k}})}    \times \nonumber\\
& & \left| g\left(\mathbf{T_{k}}\right) -\mathbf{t'}\right|^2  \mathbf{1}_{\mathbf{t'} \in \mathcal{T}(\mathbf{T_{k}})}     d \mathbf{t}_{k}..d \mathbf{t}_{2} d \mathbf{t'}      \nonumber
\end{eqnarray}
Now because the intervals of the inner integrals are independent of $\mathbf{t'}$, we can change the order of outer integral with the inner ones. Moreover, because the first integral interval is taken over the hyper volume, we know that $\int_{\mathbf{z}_{\text{T}}-\mathcal{B}_R(\mathbf{z}_{\text{T}})} .. d \mathbf{t'}=\int_{\mathcal{B}_R(\mathbf{z}_{\text{T}})-\mathbf{z}_{\text{T}}}.. d(- \mathbf{t'})=\int_{\mathcal{B}_R(\mathbf{O})}.. d(-\mathbf{t'})= \int_{\mathcal{B}_R(\mathbf{O})} .. d(\mathbf{t'})$, thus we can reorganize the integral as:
 \begin{eqnarray}
E[\left|\hat{\mathbf{z}_{\text{T}}}-\mathbf{z}_{\text{T}}\right|^2]
&=& \sum_{n=1}^{N}  \binom{N}{n}  \sum_{k=1}^{n} \binom{n}{k} \frac{\left( \mathcal{A}(\mathbb{G})-\mathcal{A}(\mathcal{B}_{R}(\mathbf{O}))\right)^{N-n}}{\mathcal{A}(\mathbb{G})^N}   \nonumber\\
& & \underbrace{\int_{\mathcal{B}_{2R}(\mathbf{O})} .. \int_{\mathcal{B}_{2R}(\mathbf{O})}}_{k-1}
 \left(\mathcal{A}\left(\mathcal{N}(\mathcal{T}(\mathbf{T_{k}}))\right)\right)^{n-k}
\mathbf{1}_{\mathbf{T_{k}}=MSS(\mathbf{T_{k}})}  \times \nonumber\\
& &
\left( \int_{\mathcal{B}_R(\mathbf{O})}   \mathbf{1}_{\mathbf{t'} \in \mathcal{T}(\mathbf{T_{k}})}    \left| g\left(\mathbf{T_{k}}\right) -\mathbf{t'}\right|^2  d \mathbf{t'}  \right)  d \mathbf{t}_{k}..d \mathbf{t}_{2}      \nonumber
\end{eqnarray}
because at least one sensor is assumed to be detecting $\mathcal{T}(\mathbf{T_{k}}) \subset \mathcal{B}_{R}(O)$. Therefore, the last integral can be written over $\mathcal{T}(\mathbf{T_{k}})$ i.e.,
\begin{eqnarray}
E[\left|\hat{\mathbf{z}_{\text{T}}}-\mathbf{z}_{\text{T}}\right|^2]
&=& \sum_{n=1}^{N} \binom{N}{n} \sum_{k=1}^{n} \binom{n}{k} \frac{\left( \mathcal{A}(\mathbb{G})-\mathcal{A}(\mathcal{B}_{R}(\mathbf{O}))\right)^{N-n}}{\mathcal{A}(\mathbb{G})^N}   \nonumber\\
& &  \underbrace{\int_{\mathcal{B}_{2R}(\mathbf{O})} .. \int_{\mathcal{B}_{2R}(\mathbf{O})}}_{k-1}
 \left(\mathcal{A}\left(\mathcal{N}(\mathcal{T}(\mathbf{T_{k}}))\right)\right)^{n-k}
\mathbf{1}_{\mathbf{T_{k}}=MSS(\mathbf{T_{k}})}  \times \nonumber\\
& &
\left( \int_{\mathcal{T}(\mathbf{T_{k}})}   \left| g\left(\mathbf{T_{k}}\right) -\mathbf{t'}\right|^2  d \mathbf{t'}  \right)  d \mathbf{t}_{k}..d \mathbf{t}_{2}
\end{eqnarray}
which is indicating that to make the expectation minimum, $g\left(\mathbf{T_{k}}\right)$ should be the center of mass of $\mathcal{T}(\mathbf{T_{k}})$, i.e.
\begin{eqnarray}
\label{gofT}
g\left(\mathbf{T_{k}}\right)=CM\left(\mathcal{T}(\mathbf{T_{k}})\right)
\end{eqnarray}
where $CM$ denotes the center of mass. Now that the answer to the minimization problem is known, it is easy to verify that  $g\left(\mathbf{T_{k}}\right)=CM\left(\mathcal{T}(\mathbf{T_{k}})\right)
$ makes the variance minimum because $g\left(\mathbf{T_{k}}\right)$ is independent from $\mathbf{t'}$ and the only point in space which has minimum average distance from all $\mathbf{t'} \in \mathcal{T}(\mathbf{T_{k}})$ is $CM\left(\mathcal{T}(\mathbf{T_{k}})\right)$. In other words, it makes the inner integral minimum and $g\left(\mathbf{T_{k}}\right)$  does not show up in any other parts of the expression. It's worth mentioning here that although center of mass of a \textit{possible target region} has been used as a localizer in literature \cite{He2003,Lazos2005,Lazos2004}, it appears that it has never been proved to be the MVU estimator.    \\
Furthermore, similar to previous derivation we can show that $g\left(\mathbf{T_{k}}\right)=CM\left(\mathcal{T}(\mathbf{T_{k}})\right)$ is unbiased as following:
\begin{eqnarray}
E[\hat{\mathbf{z}_{\text{T}}}-\mathbf{z}_{\text{T}}]
&=& \sum_{n=1}^{N} \binom{N}{n} \sum_{k=1}^{n} \binom{n}{k} \frac{\left( \mathcal{A}(\mathbb{G})-\mathcal{A}(\mathcal{B}_{R}(\mathbf{O}))\right)^{N-n}}{\mathcal{A}(\mathbb{G})^N}   \nonumber\\
& &  \underbrace{\int_{\mathcal{B}_{2R}(\mathbf{O})} .. \int_{\mathcal{B}_{2R}(\mathbf{O})}}_{k-1}
\left(\mathcal{A}\left(\mathcal{N}(\mathcal{T}(\mathbf{T_{k}}))\right)\right)^{n-k}
\mathbf{1}_{\mathbf{T_{k}}=MSS(\mathbf{T_{k}})}  \times \nonumber\\
& &
\left( \int_{\mathcal{T}(\mathbf{T_{k}})}   \mathbf{1}_{\mathbf{t'} \in \mathcal{T}(\mathbf{T_{k}})}    \left( g\left(\mathbf{T_{k}}\right) -\mathbf{t'}\right)  d \mathbf{t'}  \right)  d \mathbf{t}_{k}..d \mathbf{t}_{2}      \nonumber\\
&=& \sum_{n=1}^{N}  \binom{N}{n} \sum_{k=1}^{n} \binom{n}{k}  \frac{\left( \mathcal{A}(\mathbb{G})-\mathcal{A}(\mathcal{B}_{R}(\mathbf{O}))\right)^{N-n}}{\mathcal{A}(\mathbb{G})^N}  \nonumber\\
& & \underbrace{\int_{\mathcal{B}_{2R}(\mathbf{O})} .. \int_{\mathcal{B}_{2R}(\mathbf{O})}}_{k-1}
 \left(\mathcal{A}\left(\mathcal{N}(\mathcal{T}(\mathbf{T_{k}}))\right)\right)^{n-k}
\mathbf{1}_{\mathbf{T_{k}}=MSS(\mathbf{T_{k}})}  \times \nonumber\\
& &
\left(  0 \right) d \mathbf{t}_{k}..d \mathbf{t}_{2}      \nonumber\\
&=& 0
\end{eqnarray}

Thus, MVU estimator exists and is unique. From (\ref{gofT}) and shifting sensor locations by $\mathbf{z}_1$, we conclude that for any $\mathbf{Z} \in \mathcal{R}_{\Theta}(\mathfrak{Z})$,
\begin{eqnarray}
\label{gofZ}
g\left(\mathbf{Z}\right)=CM\left(\mathcal{T}(\mathbf{Z})\right)
\end{eqnarray}
is the minimum variance unbiased estimator for $\mathbf{z}_{\text{T}}$.
\end{proof}

\subsection{Unknown Detection Radius}
For this problem the estimation parameters are $[\mathbf{z}_{\text{T}},R]$ and we know that $\mathcal{Z}_{\text{SS}}$ is a sufficient statistics. We consider two detecting range $R_1$ and $R_2$. Let us define $\mathcal{T}(\mathbf{Z},R)=\bigcap_{i \in \mathcal{S}} \mathcal{B}_{R}(\mathbf{z}_i)$. Based on previous section analysis, $g_1\left(\mathbf{Z}\right)=CM\left(\mathcal{T}(\mathbf{Z},R_1)\right)$ is the unique unbiased minimum variance estimator for $\mathbf{z}_{\text{T}}$ when $R=R_1$ and $g_2\left(\mathbf{T}\right)=CM\left(\ mathcal{T}(\mathbf{Z},R_2)\right)$ is the unique unbiased minimum variance estimator for $\mathbf{z}_{\text{T}}$ when $R=R_2$. Note that no matter what the $R$ parameter is, $g_2$ and $g_1$ remains  unbiased for location estimation because the probability density function of observation is isotropic and if the observation rotates, so do $g_2$ and $g_1$. These two functions can be different for a specific realization of $\mathbf{Z}$ as demonstrated in an example in figure \ref{PCRR1_1vsPCRR2_0_9}. Therefore, the MVU estimator does not exist for this case because for different parameter, $R$, the minimum variance unbiased estimators are different.
\begin{figure}[!t]
\begin{centering}
\includegraphics[width=3.5 in]{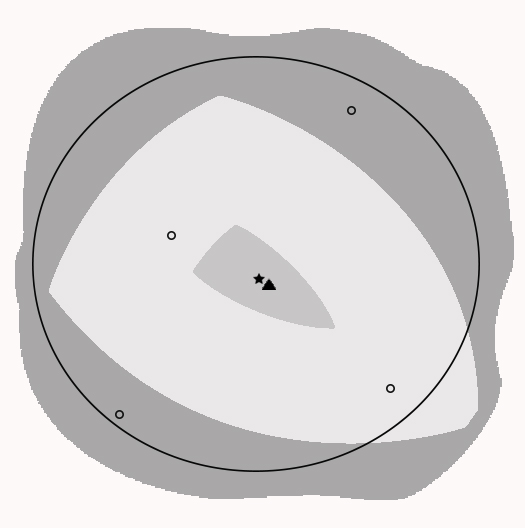}
\caption{Illustration of $\mathcal{T}(\mathbf{Z},R)$ and corresponding $CM$  for $R=1.05$ and $R=1.6$}
\label{PCRR1_1vsPCRR2_0_9}
\end{centering}
\end{figure}

\section{Sub-Optimum Estimators in Two Dimensional Space}
\label{sec:suboptimum}

\subsection{When Detecting Radius is Known}
Although we proved in theorem \ref{theoconvex} that $\mathcal{T}(\mathbf{Z})$ is a convex shape, it is not an easy shape to work with. Not only its visualization but also finding the center of mass of that shape is computationally costly. 
An intuitive approach to decrease the complexity is to find the center of mass of the convex hull of the corners of $\mathcal{T}(\mathbf{Z})$ as illustrated in figure \ref{IllustrationSuboptimal}.\\ 
The center of mass of this shape can be found easily in two dimensional space by triangulating it through a triangulation algorithm such as the ones discovered by Euler, Fournier or Toussaint \cite{Boissonnat1989}-\cite{Guibas1987}.
Assume the result of triangulation to be $q$ triangles each represented by its corners $A_i$,$B_i$,$C_i$ as $\Delta_{A_i,B_i,C_i}$, then
\begin{eqnarray}
CM\left(\mathcal{SLT}(\mathbf{Z})\right)=\sum_{i=1}^{q} \frac{\mathcal{A}(\Delta_{A_i,B_i,C_i})}{\sum_{i=1}^{q} \mathcal{A}(\Delta_{A_i,B_i,C_i})} \times \frac{A_i+B_i+C_i}{3}
\end{eqnarray}
would be a sub-optimum estimator for $\mathbf{z}_{\text{T}}$ where $\mathcal{SLT}(\mathbf{Z})$ is the convex polygon generated by connecting the consecutive corners of $\mathcal{T}(\mathbf{Z})$ by straight lines, and $\frac{A_i+B_i+C_i}{3}$ is the center of mass for the $i$th triangle


\subsection{When Detecting Radius is Unknown}
As we discussed in previous section, the MVU estimator does not exist for this case. Still the MVU of the case when $R$ is known act as a Clairvoyant\footnote{A Clairvoyant estimator is referred to an estimator with advanced knowledge of some parameters of estimation as if it is provided by a genie.} estimator and its performance act as the lower bound for all estimators who lacks the knowledge of $R$.  We may notice that all estimators provided in \cite{ArianSpawc2010} can be employed when $R$ is unknown. In the next section we will see the simulation results and we will find out that center of MEC performance follows the Clairvoyant estimator closely. So it may be used as a sub-optimum estimator in this case. \\

\begin{figure}[!t]
\begin{centering}
\includegraphics[width=3.5 in]{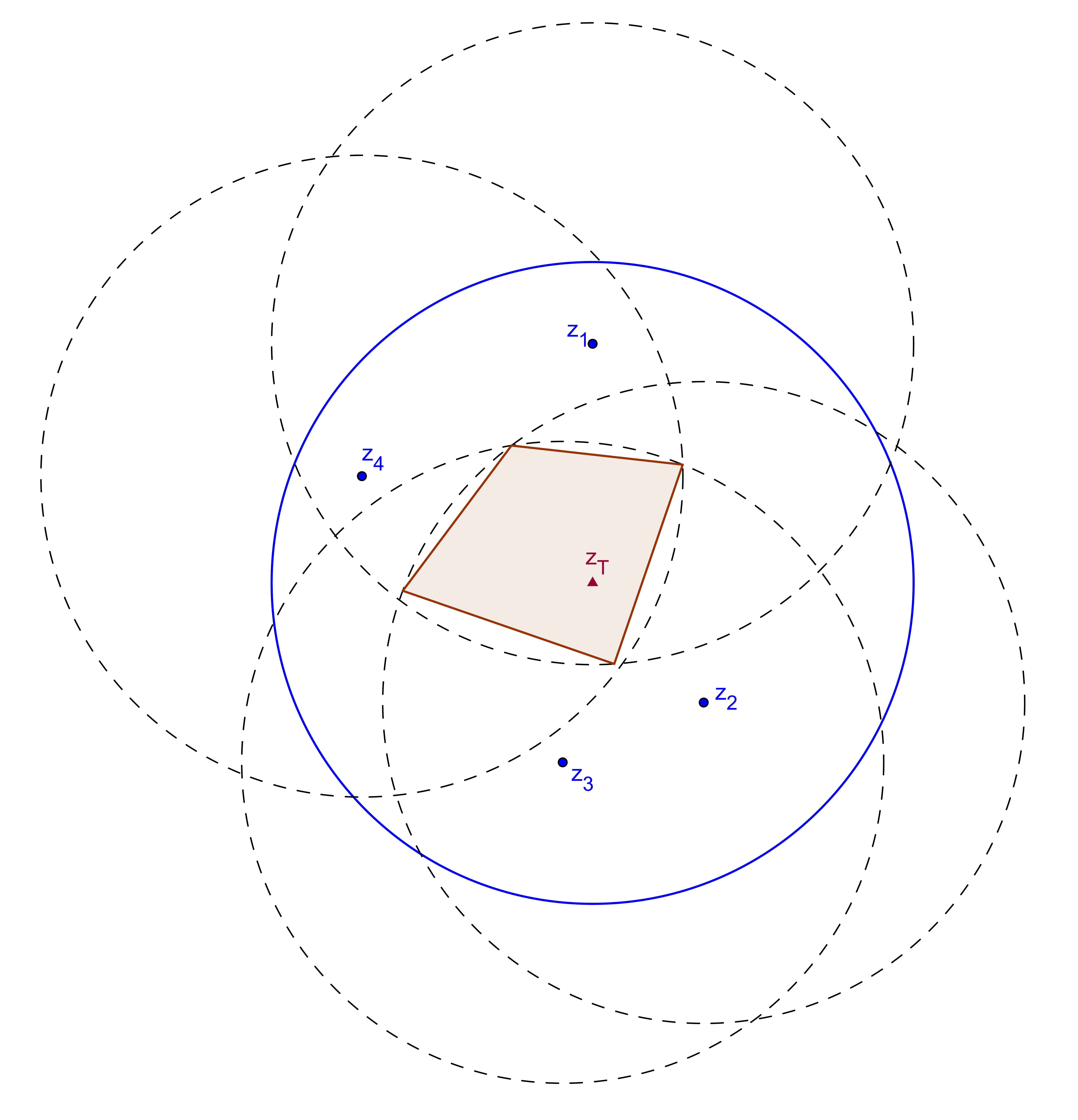}
\caption{An illustration of $\mathcal{SLT}(\mathbf{Z})$}
\label{IllustrationSuboptimal}
\end{centering}
\end{figure}

\section{Simulation Results}
\label{sec:SimulationResults}
A set of simulations are performed in two dimensional space to verify that for known $R$, the center of mass of $\mathcal{T}(\mathbf{Z})$ is the minimum variance estimator among other famous estimators. We assume that in each trial $N$ sensors are dispersed  randomly in a rectangular region $A = 100m\times 100m$, centered at the origin, where the target is located. Number of sensors depend on the density of sensor deployment, $\rho$ and is assumed to be $N=\lfloor \rho \times A \rfloor$. Any sensors located within the $R$ distance of the origin would be a detecting sensor and vice versa. In \cite{ArianSpawc2010}, a number of heuristic algorithm introduced for estimation of target location when only the location of detecting sensors will be reported. In these methods, the target location will be estimated as the Steiner center, the center of the minimum enclosing rectangle or the center of the minimum enclosing circle for the locations of detecting sensors denoted by Steiner center, center of MER, and center of MEC respectively.
For the purpose of comparison, the center of mass of \textit{possible target region} based on the detecting sensors, $CM\left(\mathcal{T}(\mathbf{Z})\right)$, have been considered along with the above mentioned heuristic methods. An extension of a grid base algorithm has been used to find the center of mass of \textit{possible target region}. In addition, Delaunay algorithm readily available in Matlab is employed to implement the triangulation used in sub-optimum estimator \cite{Shewchuk2002Delaunay,Lin1996Delaunay}.
Ignoring the trials that result in no detecting sensor, mean square error of these methods have been calculated for each case.

Figures \ref{OnlyD_R_1_Tr8000} and \ref{OnlyD_R_5_Tr8000} depict Mean Square Error (MSE) of the above mentioned estimators versus density when $R$ is fixed and is equal to 1 and 5 respectively. The results have been averaged over 8000 trials. As can be seen in the graphs, the center of mass of $\mathcal{T}(\mathbf{Z})$ beats heuristic estimators as density increases. When the density is small, the likelihood of trials with only one or two sensors detecting is high which in these cases, all methods have identical estimates. It is also noticeable that for large densities, the sub-optimum estimator follows the $CM\left(\mathcal{T}(\mathbf{Z})\right)$ closely.

Figure \ref{OnlyD_D_1_Tr2000} and \ref{OnlyD_D_4_Tr2000} depict MSE of the above mentioned estimators versus $R$ when density of sensor deployment, $\rho$, is fixed and is equal to 1 and 4 respectively. The results have been averaged over 2000 trials. As is clear from these graphs, the center of mass of $\mathcal{T}(\mathbf{Z})$ beats the heuristic estimators again as  $R$ increases.

\begin{figure}[!t]
\begin{centering}
\includegraphics[width=3.5 in]{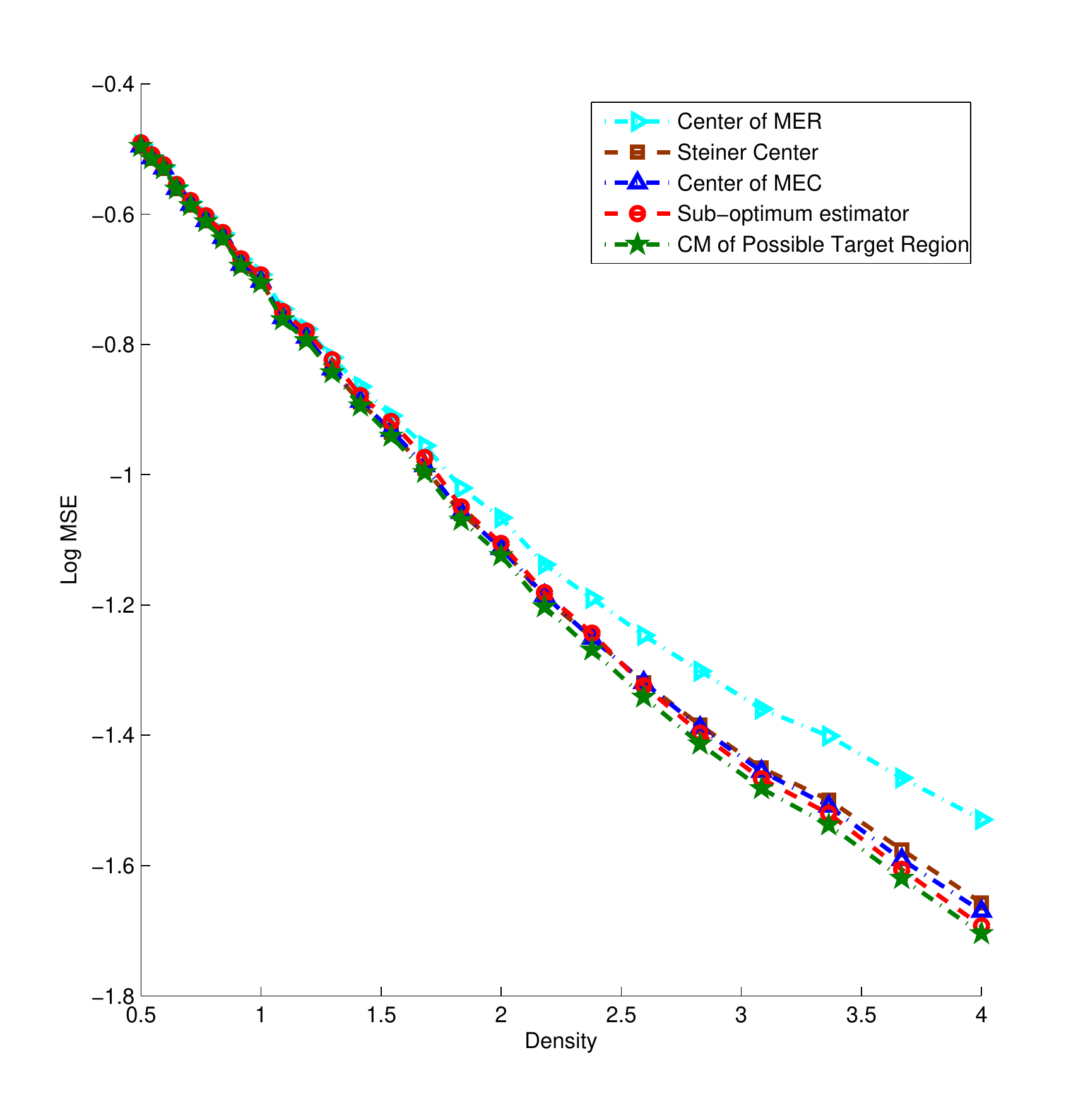}
\caption{MSE versus density for R=1}
\label{OnlyD_R_1_Tr8000}
\end{centering}
\end{figure}

\begin{figure}[!t]
\begin{centering}
\includegraphics[width=3.5 in]{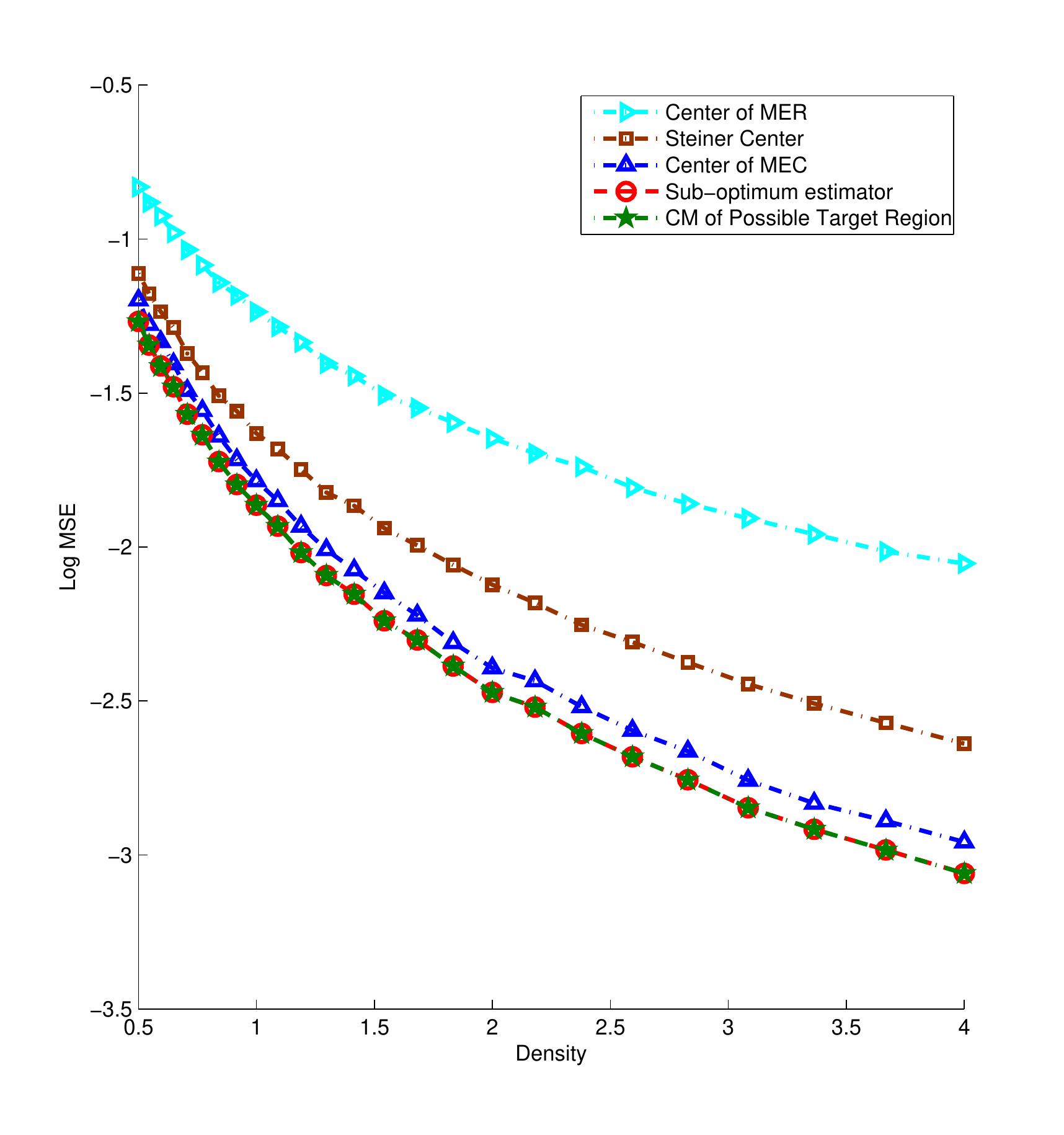}
\caption{MSE versus density for R=5 }
\label{OnlyD_R_5_Tr8000}
\end{centering}
\end{figure}

\begin{figure}[!t]
\begin{centering}
\includegraphics[width=3.5 in]{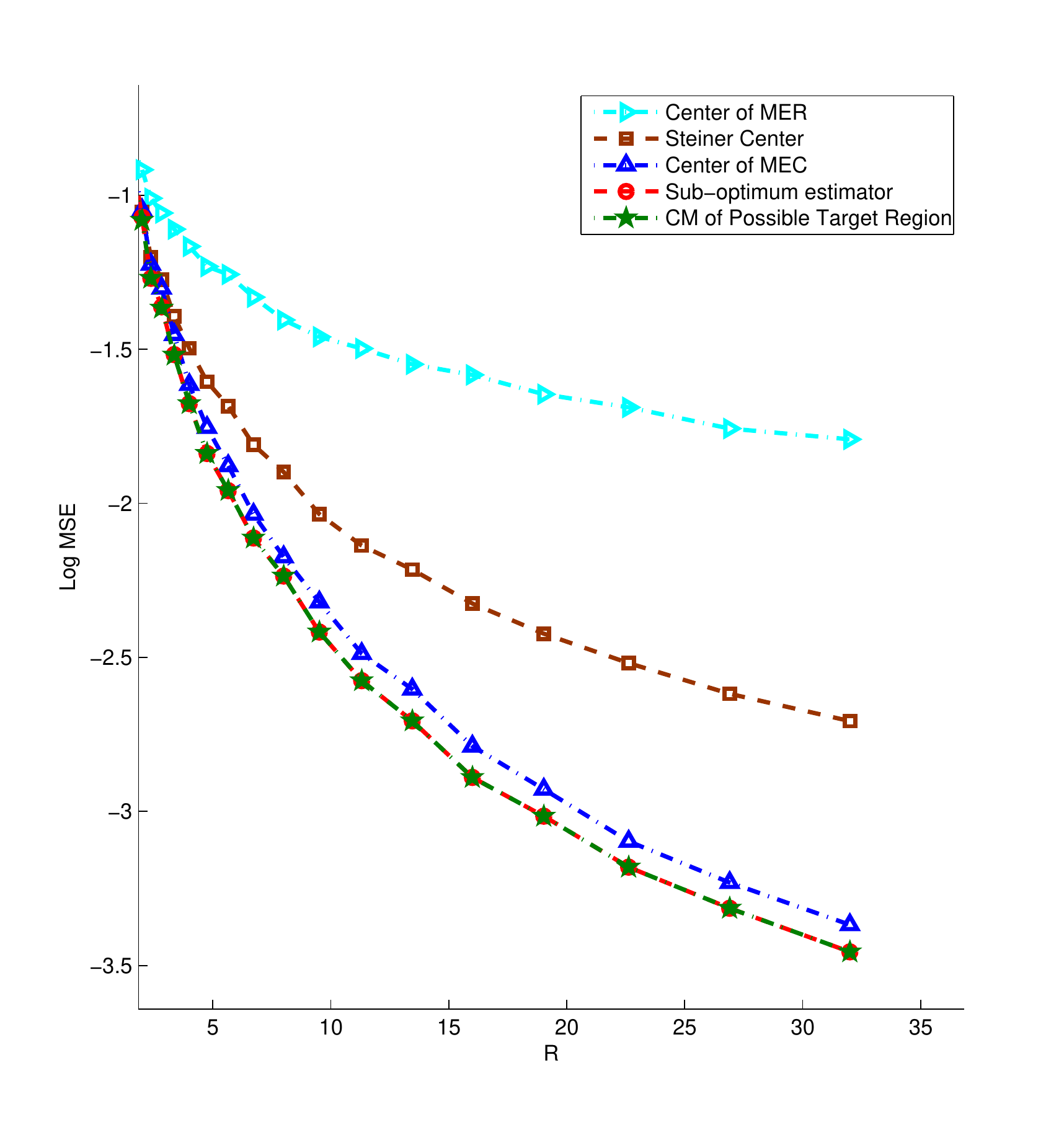}
\caption{MSE versus radius for $\rho=1$}
\label{OnlyD_D_1_Tr2000}
\end{centering}
\end{figure}

\begin{figure}[!t]
\begin{centering}
\includegraphics[width=3.5 in]{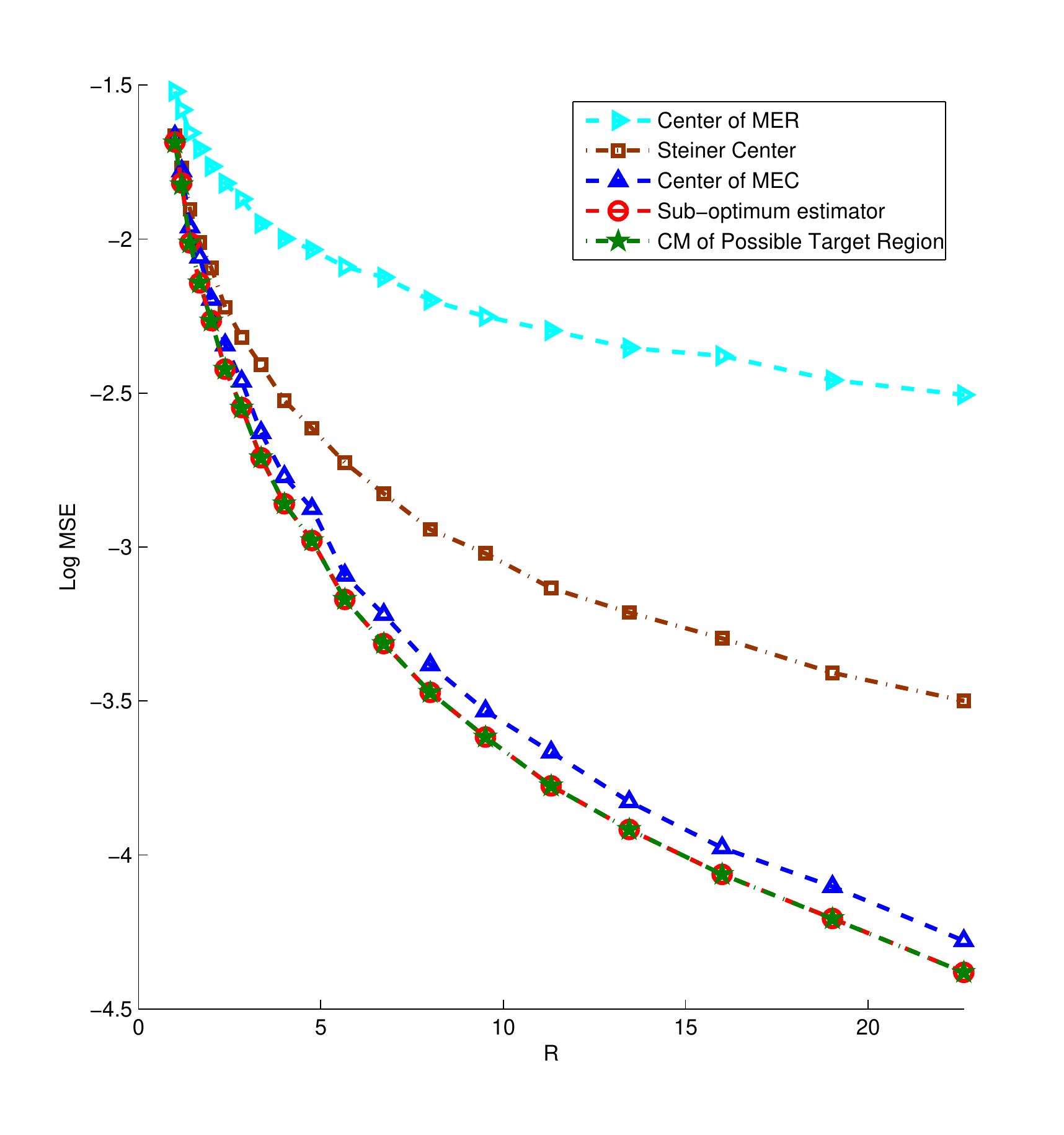}
\caption{MSE versus radius for $\rho=4$}
\label{OnlyD_D_4_Tr2000}
\end{centering}
\end{figure}

\section{Conclusion}
\label{sec:Conclusion}
This paper investigated the existence of an MVU location estimator when noise free detectors are deployed around a target and only the detecting sensors report their locations to a fusion center. It is proven mathematically that when the radius of detection is known and the target is located well inside the sensor deployment region, the center of mass of the \textit{possible target region} is the MVU estimator and when the radius of detection is not known the MVU estimator does not exist. Moreover, the minimal sufficient statistics of the detecting sensors are derived both when the radius of detection is known and when it is not known. In addition, a set of simulations is performed to compare the performance of the MVU estimator with various heuristic estimators. Finally, a sub-optimum estimators introduced, which is computationally less complex and the processing cost is independent of its resolution. It is also shown that when the density of deployment sensor is increased the performance of the sub-optimum estimator approaches the MVU performance.


\bibliographystyle{IEEEbib}
\bibliography{refs}

\begin{thebibliography}{10}

\bibitem{Amundson09}
I.~Amundson and X.~D. Koutsoukos,
\newblock ``A survey on localization for mobile wireless sensor networks,''
\newblock in {\em Second International Workshop on Mobile Entity Localization
  and Tracking (MELT)}, 2009, pp. 235--254.

\bibitem{Wang2012}
Huifeng Wang, Zhan Gao, Yan Guo, and Yuzhen Huang,
\newblock ``A survey of range-based localization algorithms for cognitive radio
  networks,''
\newblock in {\em Consumer Electronics, Communications and Networks (CECNet),
  2012 2nd International Conference on}, April 2012, pp. 844--847.

\bibitem{Oshman1999DOA}
Y.~Oshman and P.~Davidson,
\newblock ``Optimization of observer trajectories for bearings-only target
  localization,''
\newblock {\em Aerospace and Electronic Systems, IEEE Transactions on}, vol.
  35, no. 3, pp. 892 --902, jul 1999.

\bibitem{Kaplan2001DOA}
L.M. Kaplan, Qiang Le, and N.~Molnar,
\newblock ``Maximum likelihood methods for bearings-only target localization,''
\newblock in {\em Acoustics, Speech, and Signal Processing, 2001. Proceedings.
  (ICASSP '01). 2001 IEEE International Conference on}, 2001, vol.~5, pp. 3001
  --3004 vol.5.

\bibitem{Peng06AoA}
Rong P. and M.~L. Sichitiu,
\newblock ``Angle of arrival localization for wireless sensor networks,''
\newblock in {\em Proc. IEEE SECON}, Sep 2006, vol.~1, pp. 374--382.

\bibitem{Kung1998TDOA}
Kung Yao, R.E. Hudson, C.W. Reed, Daching Chen, and F.~Lorenzelli,
\newblock ``Blind beamforming on a randomly distributed sensor array system,''
\newblock {\em Selected Areas in Communications, IEEE Journal on}, vol. 16, no.
  8, pp. 1555 --1567, oct 1998.

\bibitem{Yang2005TDOA}
B.~Yang and J.~Scheuing,
\newblock ``Cramer-rao bound and optimum sensor array for source localization
  from time differences of arrival,''
\newblock in {\em Acoustics, Speech, and Signal Processing, 2005. Proceedings.
  (ICASSP '05). IEEE International Conference on}, march 2005, vol.~4, pp.
  iv/961 -- iv/964 Vol. 4.

\bibitem{Yang2006TDOA2}
Bin Yang and J.~Scheuing,
\newblock ``A theoretical analysis of 2d sensor arrays for tdoa based
  localization,''
\newblock in {\em Acoustics, Speech and Signal Processing, 2006. ICASSP 2006
  Proceedings. 2006 IEEE International Conference on}, may 2006, vol.~4, p.~IV.

\bibitem{sheng2003collaborative}
X.~Sheng and Y.H. Hu,
\newblock ``Collaborative source localization in wireless sensor network
  system,''
\newblock in {\em IEEE Globecom}. Citeseer, 2003.

\bibitem{Sheng05}
X.~Sheng and Y.~H. Hu,
\newblock ``Maximum likelihood multiple-source localization using acoustic
  energy measurements with wireless sensor networks,''
\newblock {\em IEEE Transactions on Signal Processing}, vol. 53, no. 1, pp.
  44--53, Jan. 2005.

\bibitem{Blatt06}
D.~Blatt and A.~O. Hero,
\newblock ``Energy-based sensor network source localization via projection onto
  convex sets,''
\newblock {\em IEEE Transactions on Signal Processing}, vol. 54, no. 9, pp.
  3614--3619, Sep. 2006.

\bibitem{kieffer2006centralized}
M.~Kieffer and E.~Walter,
\newblock ``Centralized and distributed source localization by a network of
  sensors using guaranteed set estimation,''
\newblock in {\em Acoustics, Speech and Signal Processing, 2006. ICASSP 2006
  Proceedings. 2006 IEEE International Conference on}. IEEE, 2006, vol.~4, pp.
  IV--IV.

\bibitem{Vaghefi11_RSSbased}
R.~M. Vaghefi, M.~R. Gholami, and E.~G. Strom,
\newblock ``{RSS}-based sensor localization with unknown transmit power,''
\newblock in {\em Proc. IEEE ICASSP}, May 2011, pp. 2480--2483.

\bibitem{Rabbat2005}
M.~Rabbat, R.~Nowak, and J.~Bucklew,
\newblock ``Robust decentralized source localization via averaging,''
\newblock in {\em Acoustics, Speech, and Signal Processing, 2005. Proceedings.
  (ICASSP '05). IEEE International Conference on}, Mar. 2005, vol.~5, pp.
  v/1057 -- v/1060 Vol. 5.

\bibitem{Ampeliotis08}
D.~Ampeliotis and K.~Berberidis,
\newblock ``Energy-based model-independent source localization in wireless
  sensor networks,''
\newblock in {\em 16th European Signal Processing Conference (EUSIPCO-2008)},
  2008.

\bibitem{Bulusu02}
N.~Bulusu, V.~Bychkovskiy, D.~Estrin, and Heidemann J.,
\newblock ``Scalable, ad hoc deployable, rf-based localization,''
\newblock in {\em Grace Hopper Celebration of Women in Computing}, 2002.

\bibitem{VarshneyFading07}
Ruixin Niu and P.K. Varshney,
\newblock ``Source localization in sensor networks with rayleigh faded
  signals,''
\newblock in {\em IEEE International Conference on Acoustics, Speech and Signal
  Processing (ICASSP)}, Apr 2007, vol.~3, pp. III--1229 --III--1232.

\bibitem{VarshneyBinCoh}
R.~Niu and P.~Varshney,
\newblock ``Target location estimation in wireless sensor networks using binary
  data,''
\newblock in {\em Proceedings of the 38th Annual Conference on Information
  Sciences and Systems}, 2004.

\bibitem{Arian2014SSP}
A~Shoari and A~Seyedi,
\newblock ``On localization of a non-cooperative target with non-coherent
  binary detectors,''
\newblock {\em Signal Processing Letters, IEEE}, vol. 21, no. 6, pp. 746--750,
  June 2014.

\bibitem{ArianSpawc2010}
A.~Shoari and A.~Seyedi,
\newblock ``Localization of an uncooperative target with binary observations,''
\newblock in {\em Signal Processing Advances in Wireless Communications
  (SPAWC), 2010 IEEE Eleventh International Workshop on}, june 2010, pp. 1 --5.

\bibitem{Liu2007Target}
Xiangqian Liu, Gang Zhao, and Xiaoli Ma,
\newblock ``Target localization and tracking in noisy binary sensor networks
  with known spatial topology,''
\newblock in {\em Acoustics, Speech and Signal Processing, 2007. ICASSP 2007.
  IEEE International Conference on}, april 2007, vol.~2, pp. II--1029
  --II--1032.

\bibitem{Murthy2011multiple}
Y.~R. Venugopalakrishna~et. al.,
\newblock ``Multiple transmitter localization and communication footprint
  identification using sparse reconstruction techniques,''
\newblock in {\em Proc. IEEE ICC}, 2011.

\bibitem{Murthy2012multiple}
Y.~R. Venugopalakrishna, C.~R. Murthy, and D.~Narayana~Dutt,
\newblock ``Multiple transmitter localization and communication footprint
  identification using energy measurements,''
\newblock {\em Physical Communication}, 2012.

\bibitem{Lee2007}
B.~H. Lee, Y.T. Chan, F.~Chan, Huai-Jing Du, and Fred~A Dilkes,
\newblock ``Doppler frequency geolocation of uncooperative radars,''
\newblock in {\em Military Communications Conference, 2007. MILCOM 2007. IEEE},
  Oct 2007, pp. 1--6.

\bibitem{Yu2012}
H.~Yu, G.~Huang, and J.~Gao,
\newblock ``Constrained total least-squares localisation algorithm using time
  difference of arrival and frequency difference of arrival measurements with
  sensor location uncertainties,''
\newblock {\em Radar, Sonar Navigation, IET}, vol. 6, no. 9, pp. 891--899,
  December 2012.

\bibitem{Fuyong2014comments}
Fuyong Qu and Xiangwei Meng,
\newblock ``Comments on 'constrained total least-squares localisation algorithm
  using time difference of arrival and frequency difference of arrival
  measurements with sensor location uncertainties',''
\newblock {\em Radar, Sonar Navigation, IET}, vol. 8, no. 6, pp. 692--693, July
  2014.

\bibitem{artes2004target}
A.~Artes-Rodriguez, M.~Lazaro, and L.~Tong,
\newblock ``Target location estimation in sensor networks using range
  information,''
\newblock in {\em Sensor Array and Multichannel Signal Processing Workshop
  Proceedings, 2004}. IEEE, 2004, pp. 608--612.

\bibitem{Rodriguez2004}
,''
\newblock .

\bibitem{Shenoy2005}
S.~Shenoy and Jindong Tan,
\newblock ``Simultaneous localization and mobile robot navigation in a hybrid
  sensor network,''
\newblock in {\em Intelligent Robots and Systems, 2005. (IROS 2005). 2005
  IEEE/RSJ International Conference on}, Aug 2005, pp. 1636--1641.

\bibitem{Liu2004}
Chong Liu, Kui Wu, and Tian He,
\newblock ``Sensor localization with ring overlapping based on comparison of
  received signal strength indicator,''
\newblock in {\em Mobile Ad-hoc and Sensor Systems, 2004 IEEE International
  Conference on}, Oct 2004, pp. 516--518.

\bibitem{Mittelhammer2013}
R.C. Mittelhammer,
\newblock {\em Mathematical Statistics for Economics and Business},
\newblock 2013.

\bibitem{David2003OrderStatistics}
H.~A. David and H.~N. Nagaraja,
\newblock {\em Order Statistics},
\newblock John Wiley \& Sons, 2003.

\bibitem{boyd2004convex}
S.P. Boyd and L.~Vandenberghe,
\newblock {\em Convex Optimization},
\newblock Cambridge University Press, 2004.

\bibitem{Kay93Estimationbook}
S.~M. Kay,
\newblock {\em Fundamentals of statistical signal processing, Volume I:
  Estimation Theory (v. 1)},
\newblock Prentice-Hall Englewood Cliffs, NJ, 1993.

\bibitem{oks2006minkowski}
Eduard Oks and Micha Sharir,
\newblock ``Minkowski sums of monotone and general simple polygons,''
\newblock {\em Discrete \& Computational Geometry}, vol. 35, no. 2, 2006.

\bibitem{He2003}
Tian He, Chengdu Huang, Brian~M. Blum, John~A. Stankovic, and Tarek Abdelzaher,
\newblock ``Range-free localization schemes for large scale sensor networks,''
\newblock in {\em Proceedings of the 9th Annual International Conference on
  Mobile Computing and Networking}, New York, NY, USA, 2003, MobiCom '03, pp.
  81--95, ACM.

\bibitem{Lazos2005}
Loukas Lazos and Radha Poovendran,
\newblock ``Serloc: Robust localization for wireless sensor networks,''
\newblock {\em ACM Trans. Sen. Netw.}, vol. 1, no. 1, pp. 73--100, Aug. 2005.

\bibitem{Lazos2004}
Loukas Lazos and Radha Poovendran,
\newblock ``Serloc: Secure range-independent localization for wireless sensor
  networks,''
\newblock in {\em Proceedings of the 3rd ACM Workshop on Wireless Security},
  New York, NY, USA, 2004, WiSe '04, pp. 21--30, ACM.

\bibitem{Boissonnat1989}
M.~Yvinec,
\newblock ``Triangulation in 2d and 3d space,''
\newblock in {\em Geometry and Robotics}, vol. 391 of {\em Lecture Notes in
  Computer Science}, pp. 275--291. Springer Berlin Heidelberg, 1989.

\bibitem{Guibas1987}
Leonidas Guibas, John Hershberger, Daniel Leven, Micha Sharir, and RobertE.
  Tarjan,
\newblock ``Linear-time algorithms for visibility and shortest path problems
  inside triangulated simple polygons,''
\newblock {\em Algorithmica}, vol. 2, no. 1-4, pp. 209--233, 1987.

\bibitem{Shewchuk2002Delaunay}
Jonathan~Richard Shewchuk,
\newblock ``Delaunay refinement algorithms for triangular mesh generation,''
\newblock {\em Computational Geometry}, vol. 22, no. 1–3, pp. 21 -- 74, 2002,
\newblock 16th \{ACM\} Symposium on Computational Geometry.

\bibitem{Lin1996Delaunay}
JonathanRichard Shewchuk,
\newblock ``Triangle: Engineering a 2d quality mesh generator and delaunay
  triangulator,''
\newblock in {\em Applied Computational Geometry Towards Geometric
  Engineering}, MingC. Lin and Dinesh Manocha, Eds., vol. 1148 of {\em Lecture
  Notes in Computer Science}, pp. 203--222. Springer Berlin Heidelberg, 1996.

\end{thebibliography}
\end{document}